\documentclass[runningheads]{llncs}

\newif\ifappendix
\appendixtrue

\ifappendix
\else
\def\marginpar#1{}
\fi

\usepackage{amsmath,amssymb}
\usepackage{mathtools}  
\usepackage[vlined]{algorithm2e}
\usepackage[hypertexnames=false]{hyperref}
\usepackage{standalone}
\usepackage{enumitem}
\usepackage{booktabs}
\usepackage{longtable}
\usepackage{xcolor}
\usepackage{colortbl}
\usepackage{todonotes}
\usepackage{fontawesome5}
\usepackage{adjustbox}
\usepackage{xspace}
\usepackage{spotltl}
\usepackage{mathrsfs}           
\usepackage{thmtools}
\usepackage{thm-restate}
\usepackage{subcaption}
\usepackage{tikz}
\usetikzlibrary{automata}
\usetikzlibrary{arrows.meta}
\usetikzlibrary{bending}
\usetikzlibrary{calc}
\usetikzlibrary{positioning}
\usetikzlibrary{shapes.geometric}
\usetikzlibrary{decorations.text}
\usetikzlibrary{backgrounds}
\usetikzlibrary{myautomata}

\tikzset{
      >={Stealth[round,bend]},
      StealthDouble/.tip={Stealth[round,quick,length=5.75pt]},
      double>/.style={-StealthDouble,double},
      root/.style={draw,rectangle,rounded corners=3pt},
      inode/.style={draw,circle,inner sep=0pt,minimum height=1.4em},
      low/.style={->,densely dotted},
      high/.style={->},
      glink/.style={->},
      killed/.style={red!50},
      scca/.style={draw=cyan,rounded corners=1mm},
      sccr/.style={draw=magenta,rounded corners=1mm},
      termn/.style={draw,rectangle},
      terma/.style={draw,rectangle,double},
      rlink/.style={->,shorten >=3pt},
      win/.style={fill=green!20},
      winedge/.style={thick,draw=green!66!black},
      lose/.style={fill=red!20},
      control/.style={draw,inner sep=1pt,diamond,aspect=1},
      env/.style={draw,inner sep=1pt,rectangle,minimum width=1.2em},
      automaton/.style={
        shorten >=1pt,>={Stealth[round,bend]},
        node distance=2cm,
        initial text=,
        every initial by arrow/.style={every node/.style={inner sep=0pt}},
      every state/.style={
        align=center,
        fill=white,
        minimum size=7.5mm,
        inner sep=0pt,
        execute at begin node=\strut,
      },%
   }
}

\newcommand{\tr}{\mathsf{tr}}
\def\lang{\mathscr{L}}

\SetProcNameSty{texttt}
\SetKwProg{Fn}{Function}{}{end}
\SetVlineSkip{0.0em}
\SetKwComment{tcp}{// }{}
\SetKwInOut{Input}{input}
\SetKwInOut{Output}{output}
\SetKwInput{Var}{var}
\SetKwFunction{mk}{makebdd}
\SetKwFunction{apply}{apply2}
\SetKwFunction{applyone}{apply1}
\SetKwFunction{applysc}{apply2sc}
\SetKwFunction{eval}{eval}
\SetKwFunction{fun}{fun}
\SetKwFunction{leaffun}{leaffun}
\SetKwFunction{funsh}{f}
\SetKwFunction{qtb}{quant\_to\_bool}
\SetKwFunction{leaves}{leaves}
\SetKwFunction{solveone}{solve1}
\SetKwFunction{solvetwo}{solve2}
\SetKwFunction{newvertex}{new\_vertex}
\SetKwFunction{newedge}{new\_edge}
\SetKwFunction{freezevertex}{freeze\_vertex}
\SetKwFunction{setwinner}{set\_winner}
\SetKwFunction{realizability}{realizability}
\SetKwFunction{declarevertex}{declare\_vertex}
\SetKwFunction{refine}{refine}
\SetKwFunction{win}{win\textsubscript{\textsc{o}}}
\SetKwFunction{lose}{win\textsubscript{\textsc{i}}}
\SetKwFunction{dfs}{dfs}
\SetKw{Assert}{assert}
\SetKw{SuchThat}{such that}
\SetKw{Continue}{continue}
\SetKw{Break}{break}

\newcommand{\CA}{\mathcal{A}}

\newcommand{\CI}{\mathcal{I}}
\newcommand{\CO}{\mathcal{O}}
\newcommand{\CS}{\mathcal{S}}

\newcommand{\CP}{\mathcal{P}}
\newcommand{\CQ}{\mathcal{Q}}
\newcommand{\BB}{\mathbb{B}}
\newcommand{\NN}{\mathbb{N}}

\newcommand{\MTBDD}{\mathsf{MTBDD}}

\newcommand{\LTL}{{\texorpdfstring{\ensuremath{\mathsf{LTL}}}{LTL}}\xspace}
\newcommand{\LTLf}{{\texorpdfstring{LTL\textsubscript{f}}{LTLf}}\xspace}

\definecolor{lime}{HTML}{A6CE39}
\DeclareRobustCommand{\orcidicon}{
	\hspace{-2.5mm}
	\begin{tikzpicture}[baseline={(0,-0.12)}]
	\draw[lime, fill=lime] (0,0)
	circle [radius=0.16]
	node[white] (ID) {{\fontfamily{qag}\selectfont \tiny ID}};
	\draw[white, fill=white] (-0.0625,0.095)
	circle [radius=0.007];
	\end{tikzpicture}
	\hspace{-2.5mm}
      }
\def\orcidID#1{\href{https://orcid.org/#1}{\smash{\orcidicon}}}

\title{Fast Obligation Translation and Synthesis}

\author{Alexandre~Duret-Lutz\inst{1}\orcidID{0000-0002-6623-2512} \and
  Giuseppe~De~Giacomo\inst{2}\orcidID{0000-0001-9680-7658} \and
  Marcin Jurdzinski\inst{3}\orcidID{0000-0003-3640-8481} \and
  Nir~Piterman\inst{4}\orcidID{0000-0002-8242-5357} \and
  Moshe~Y.~Vardi\inst{5}\orcidID{0000-0002-0661-5773} \and
  Shufang~Zhu\inst{6}\orcidID{0000-0002-5922-8750}
}

\authorrunning{A. Duret-Lutz et al.}
\institute{LRE, EPITA, Le Kremlin-Bicêtre, France \and
  University of Oxford, Oxford, UK \and
  DIMAP, University of Warwick,  Coventry, UK \and
  \!\!\mbox{University\,of\,Gothenburg\,and\,Chalmers\,University\,of\,Technology,\,Gothenburg,\,Sweden} \and
  Rice University, Houston, Texas, USA \and
  University of Liverpool, Liverpool, UK
}

\begin{document}

\maketitle

\begin{abstract}
  Syntactic obligations are a fragment of LTL formulas that translate to
  deterministic weak $\omega$-automata (DWA).
  We show that syntactic obligations can be very efficiently converted to minimal
  DWA represented using multi-terminal binary decision diagrams
  (MTBDDs), and that synthesis of such specifications can be solved
  directly on the MTBDD representation on the fly.
  Our implementation in Spot shows substantial runtime improvements in translation and
  synthesis.
\end{abstract}

\section{Introduction}

Deterministic $\omega$-automata are a key ingredient to many formal methods~\cite{vardi1985automatic}.
They also play a key role in
\emph{reactive synthesis}, which takes temporal specifications, typically in Linear Temporal Logic (LTL)~\cite{pnueli.77.focs}, and converts them to strategies that ensure their satisfaction \cite{pnueli.89.popl}.
However, the computational complexity and the complex algorithms that are involved in synthesis and, as a consequence, the capacity of tools that support synthesis techniques in their full generality, have proven to be a barrier for implementation~\cite{kupferman.focs.2005,finkbeiner.16.dsse}.
Two major successes in this field have been the syntactic fragment of LTL called GR[1] \cite{piterman.06.vmcai}, and
\LTLf~\cite{degiacomo.13.ijcai}, which is a semantic variant of LTL that considers finite rather than infinite computations.
In both cases, it is possible to use efficient techniques
for handling sets of states, leading to increased capacity.
Furthermore, algorithmic techniques for analysis of state spaces and translation of specifications to deterministic automata are simpler~\cite{BloemJPPS12,degiacomo.15.ijcai}.
In particular, in the context of \LTLf synthesis, tools leverage the algorithmic simplicity of deterministic finite automata~(DFA), in contrast to $\omega$-automata~\cite{zhu.17.ijcai,bansal.20.aaai,degiacomo.21.icaps,li.25.ecai}.

\emph{Obligation properties} \cite{manna.87.podc}, which can be expressed as a boolean combination of \emph{safety} and \emph{guarantee} properties, are good candidates for a fragment of LTL where we could hope to achieve efficient translation and synthesis, exploiting the simplicity of deterministic automata. Although limited in their expressivity, obligation properties are very important in practice. Specifications that are used in model checking or synthesis are often simple (safety, guarantee, and obligation).
\marginpar{Cf. App.~\ref{app:obligfreq}}
For instance, the 55 patterns of Dwyer et al.~\cite{dwyer.98.fmsp} contain 40 obligations. Similarly, Somenzi \& Bloem's compilation of 25 LTL formulas ``found in the literature''~\cite[Table~1]{somenzi.00.cav} contains 16 obligations.  Out of the 959 unique LTL specifications collected by the Synthesis Competition~\cite{jacobs.22.arxiv} at the time of writing, at least 262 specifications are obligations, so more than~27\%.
Efficient handling of obligation properties also benefits all cases where an obligation property appears as part of a Boolean combination with other temporal properties \cite{renkin.23.fmsd}.
Obligation properties are theoretically important, forming the second level in the safety-progress hierarchy of temporal properties~\cite{manna.87.podc,manna.92.book,manna.95.book,manna.10.response}.
Recently, the safety-progress hierarchy has attracted renewed interest thanks to a normalization procedure that reduces the nesting depth of strong and weak operators~\cite{esparza.24.acm}.

Safety and guarantee properties can be mapped to very simple \emph{deterministic} $\omega$-automata~\cite{cerna.03.mfcs}: a deterministic $\omega$-automaton for a safety property only rejects runs that can reach a rejecting sink, and a deterministic $\omega$-automaton for a guarantee only accepts runs that can reach an accepting sink. Boolean combinations of safety and guarantee can be mapped to \emph{deterministic weak automata} (DWA)~\cite{loding.98.msc,loding.01.ipl}, in which each \emph{strongly connected component} (SCC) contains only rejecting cycles or only accepting cycles.  The simple nature of DWAs makes them very easy to use: they are closed under Boolean operations using algorithms similar to those of DFAs, and, after a preprocessing stage with linear-cost~\cite{loding.01.ipl}, they can also be minimized like DFAs.

Unfortunately, testing whether an LTL formula is an obligation property is quite nontrivial in
general~\cite{dax.07.atva}.\marginpar{Cf. App.~\ref{app:classmembership}}
There are, however, well-known syntactic fragments of LTL that capture safety and guarantee properties.
By considering their Boolean combinations, we get a syntactic fragment of LTL which we shall refer to as \emph{syntactic obligation}~\cite{chang.92.icalp,cerna.03.mfcs}.
Importantly, every LTL formula representing an obligation property has an equivalent syntactic-obligation formula~\cite{chang.92.icalp}.
Furthermore, many of the previously mentioned examples are
syntactic-obligation formulas.

In this work, we use, on the one hand, the simple syntactic structure of syntactic obligations to easily identify obligation properties and, on the other hand, the DFA-like nature of DWAs to improve the translations and synthesis procedures for them. We generalize our recent work on a compact automaton representation of DFAs using Multi-Terminal Binary Decision Diagrams~(MTBDDs),  
that enabled more efficient \LTLf translation and synthesis~\cite{duret.25.ciaa}.
We show that syntactic obligations can be translated efficiently into DWAs represented using MTBDDs, and that the synthesis problem for syntactic obligations can be solved by interpreting the structure of
the MTBDDs as a \emph{weak Büchi game} that can be constructed on-the-fly.

We implemented this approach in Spot, a C++ library for LTL and $\omega$-automata algorithms~\cite{duret.22.cav}. Our empirical evaluation shows that it significantly improves the generation of deterministic automata for syntactic-obligation properties as well as the synthesis performance for this class of properties. It is worth noting that by implementing efficient obligation translation and synthesis in Spot, we improve the performance of \emph{all} tools building upon Spot whenever they deal with syntactic obligations.

\section{Succinct Representation of Deterministic Büchi Automata}

In this section, we introduce a representation of deterministic Büchi
automata (DBA) based on multi-terminal binary decision diagrams (MTBDDs).

\begin{figure}[tb]
  \def\conj{}
  \begin{tikzpicture}[mediumautomaton,xscale=0.8,baseline=(current bounding box.center)]
    \node[initial,accepting,state] (4) {$\varphi_1$};
    \node[accepting,state,right=of 4] (1) {$\varphi_2$};
    \node[state,below=of 4] (2) {$\varphi_4$};
    \node[state,right=of 2] (3) {$\varphi_3$};
    \node[state,right=of 1] (5) {$\varphi_5$};
    \node[accepting,state,right=of 3] (0) {$\ttrue$};
    \path[->] (4) edge[loop above] node[auto]{$i_1\conj o$} (4)
    (1) edge[loop above] node[auto]{$\bar{i_1}\conj i_2\conj o$} (1)
    (2) edge[loop below] node[auto]{$i_1$} (2)
    (3) edge[loop below] node[auto]{$\bar{i_1}\conj i_2$} (3)
    (5) edge[loop above] node[auto]{$o$} (5)
    (0) edge[loop below] node[auto]{$\top$} (0)
    (4) edge[bend right=20] node[above=-1pt]{$\bar{i_1}\conj o$} (1)
    (4) edge node[left=-1pt,pos=.6]{$i_1\conj\bar{o}$} (2)
    (4) edge node[above=-1pt,sloped,near end]{$\bar{i_1}\conj\bar{o}$} (3)
    (1) edge[bend right=20] node[above=-1pt]{$i_1\conj i_2\conj o$} (4)
    (1) edge node[right=-1pt,pos=.6]{$\bar{i_1}\conj i_2\conj \bar{o}$} (3)
    (1) edge node[above=-1pt,sloped,near end]{$i_1\conj i_2\conj \bar{o}$} (2)
    (1) edge node[above=-1pt,sloped,near end]{$\bar{i_2}\conj \bar{o}$} (0)
    (1) edge node[above=-1pt]{$\bar{i_2}\conj o$} (5)
    (2) edge[bend right=25] node[above=-1pt]{$\bar{i_1}$} (3)
    (3) edge[bend right=10] node[above=-1pt]{$i_1\conj i_2$} (2)
    (3) edge node[above]{$\bar{i_2}$} (0)
    (5) edge node[right]{$\bar{o}$} (0)
    ;
  \end{tikzpicture}
  \begin{tabular}{l}
    $\varphi_1 =\G(i_1{\lor}\X i_2)\liff \G o$ \\
    $\varphi_2 = (i_2{\land}\G(i_1{\lor}\X i_2))\liff \G o$ \\
    $\varphi_3 = \lnot(i_2{\land} \G(i_1{\lor} \X i_2))$ \\
    $\varphi_4 = \lnot\G(i_1{\lor} \X i_2)$ \\
    $\varphi_5 = \lnot\G o$ \\
  \end{tabular}
  \begin{tikzpicture}[initial text={},yscale=0.75,baseline=(current bounding box.center)]
    \node[root] at (0,0) (r2) {$\varphi_4$};
    \node[root] at (.7,0) (r3) {$\varphi_3$};
    \node[root, accepting] at (1.4,0) (r0) {$\ttrue$};
    \node[root, accepting, initial above] at (2.1,0) (r4) {$\varphi_1$};
    \node[root, accepting] at (2.8,0) (r1) {$\varphi_2$};
    \node[root] at (3.5,0) (r5) {$\varphi_5$};

    \node[inode] at (0,-1) (i11) {$i_1$};
    \node[inode] at (.7,-1) (i12) {$i_1$};
    \node[inode] at (2.1,-1) (i13) {$i_1$};
    \node[inode] at (2.8,-1) (i14) {$i_1$};

    \node[inode] at (0.4,-2) (i21) {$i_2$};
    \node[inode] at (1,-2) (i22) {$i_2$};
    \node[inode] at (2.5,-2) (i23) {$i_2$};
    \node[inode] at (3.1,-2) (i24) {$i_2$};

    \node[inode] at (1.8,-3) (o1) {$o$};
    \node[inode] at (2.5,-3) (o2) {$o$};
    \node[inode] at (3.5,-3) (o3) {$o$};

    \node[termn] at (0,-4) (t3) {$\varphi_3$};
    \node[termn] at (.7,-4) (t2) {$\varphi_4$};
    \node[terma] at (1.4,-4) (t0) {$\ttrue$};
    \node[terma] at (2.1,-4) (t1) {$\varphi_2$};
    \node[terma] at (2.8,-4) (t4) {$\varphi_1$};
    \node[termn] at (3.5,-4) (t5) {$\varphi_5$};

    \draw[rlink] (r2) -- (i11);
    \draw[rlink] (r3) -- (i12);
    \draw[rlink] (r4) -- (i13);
    \draw[rlink] (r1) -- (i14);
    \draw[rlink] (r5) -- (o3);
    \draw[rlink] (r0) to[out=-90,in=90] (t0.75);

    \draw[low] (i11) to[out=-110,in=90] (t3.120);
    \draw[high] (i11) to[out=-95,in=90] (t2.120);
    \draw[low] (i12) -- (i21);
    \draw[high] (i12) -- (i22);
    \draw[low] (i21) to[out=-70,in=90] (t0.130);
    \draw[high] (i21) to[out=-110,in=90] (t3.90);
    \draw[low] (i22)  to[out=-70,in=90]  (t0.105);
    \draw[high] (i22)  to[out=-110,in=90]  (t2.90);

    \draw[low] (i13) to[bend right=10] (o1);
    \draw[high] (i13) to[bend right=15] (o2);
    \draw[low] (o1) to[out=-170,in=90] (t3.60);
    \draw[high] (o1) to[out=-55,in=90] (t1);

    \draw[low] (i14) -- (i23);
    \draw[high] (i14) -- (i24);

    \draw[low] (i23) -- (o3);
    \draw[high] (i23) -- (o1);
    \draw[low] (i24) -- (o3);
    \draw[high] (i24) -- (o2);

    \draw[low] (o2) to[out=-135,in=90,looseness=1.6] (t2.60);
    \draw[high] (o2) to[out=-55,in=90] (t4);

    \draw[low] (o3) to[out=-135,in=90,looseness=1.1] (t0.50);
    \draw[high] (o3) -- (t5);
  \end{tikzpicture}
  \caption{A DBA for the formula $\varphi_1=\G(i_1 \lor \X i_2) \liff \G o$, and its MTBDD representation.\label{fig:example}}
\end{figure}

The left-hand side of Figure~\ref{fig:example} shows a deterministic Büchi
automaton over the set of atomic propositions $\CP=\{i_1,i_2,o\}$.
The \emph{letters} read by the automaton are assignments of all
propositions, so there are $2^{|\CP|}$ of them, but to simplify the
notations, we label the edges of the automaton by Boolean formulas (in
this example, implicit conjunctions) that are satisfied by all the
assignments that would normally label the edge.  In Spot, this
explicit representation of automata uses edges labeled by BDDs.  In
this example, we have also named the states of the automaton using LTL
formulas, but this is merely decorative.
This DBA has four strongly-connected components:
$\{\varphi_1,\varphi_2\}$, $\{\varphi_3,\varphi_4\}$, $\{\varphi_5\}$,
and $\{\ttrue\}$.  The automaton is \emph{weak} because each SCC contains
only accepting states, or only rejecting states.  We abbreviate
\emph{weak DBA} as \emph{DWA}.

The right-hand side of Figure~\ref{fig:example} shows a semi-symbolic
representation of the same DBA, where the outgoing transitions of each state are represented using MTBDDs.  The representation is inspired from the representation of DFAs in
Mona~\cite{henrikson.95.tacas,klarlund.01.tr}, and previous
work~\cite{duret.25.ciaa}.  The reader familiar with binary decision diagrams (BDDs)~\cite{bryant.86.tc} will recognize the classical structure of a forest of BDDs: \begin{tikzpicture}[baseline=(i.base)]
  \node[inode] at (0,0) (i) {$x$}; \node[overlay] at (0.75,0.15) (l)
  {$\ell$}; \node[overlay] at (0.75,-0.15) (h) {$h$};
  \draw[low,overlay] (i) -> (l); \draw[high] (i) -> (h);
\end{tikzpicture}~~~~should be interpreted as ``if $x$ then $h$ else
$\ell$'', atomic propositions appear in the same order on all paths,
and identical subtrees are shared.  Contrary to BDDs where leaves (or
terminals) are restricted to the values true or false,
MTBDDs~\cite{long.93.bdd,minato.96.vlsi,fujita.97.fmsd,klarlund.01.tr}
support arbitrary values: in our case, we encode the name of destination states
in the leaves.
Each root represents a function from Boolean assignments of atomic propositions to destination states, corresponding to transitions of the automaton.
A set of ``root pointers'' represented by nodes of the form
\begin{tikzpicture}[baseline=(i.base)]
  \node[root] at (0,0) (i) {$s$};
  \node[overlay] at (.7,0) (d) {};
  \draw[rlink] (i) -- (d);
\end{tikzpicture} point to an MTBDD representing the outgoing transitions of state~$s$.

Writing $\MTBDD(\CP,\CS)$ to denote an MTBDD with variables in
$\CP$ and terminals of type $\CS$, let us define the MTBDD-based
representation of a DBA (MTDBA for short).

\begin{definition}[MTDBA, MTDWA]\label{def:mtdba}\label{def:mtdwa}
  An MTDBA is a tuple $\CA=\langle \CQ, \CP, \iota, \Delta, \lambda \rangle$, where $\CQ$ is a finite set of \emph{states}, $\CP$ is a finite (and ordered) set of
  \emph{variables}, $\iota\in \CQ$ is the initial state,
  $\Delta: \CQ\to \MTBDD(\CP,\CQ)$ represents the outgoing edges of
  each state, and $\lambda: \CQ \to \BB$ represents the acceptance status of each state.
  If the MTDBA is weak, we may call it MTDWA.
\end{definition}

Each path from a root pointer to a terminal in the MTDBA
representation on the right-hand side of Figure~\ref{fig:example} corresponds to
an edge in the explicit DBA represented on the left-hand side.  The MTDBA
representation is more compact because it can share common subtrees.
For instance, if two states have exactly the same outgoing transitions,
but differ only by their acceptance, their root pointers will
designate the same MTBDD.

MTBDDs support unary and binary operations using algorithms similar to
those of BDDs.  For instance, given two $x,y\in\MTBDD(\CP,\CS)$, and a
binary operation $\odot:\CS\times\CS\to\CS$ over the terminals, one
can compute $z\in \MTBDD(\CP,\CS)$, such that for all assignments $v$
of all variables in $\CP$, the terminal $z(v)$ reached in $z$ by
following $v$ is equal to $x(v)\odot y(v)$.  We can therefore lift the
$\odot$ operation to the MTBDDs and write $z = x\odot y$.

\section{From Syntactic Obligations to MTDWA}\label{sec:translation}

This section shows how to build an MTDWA for a syntactic obligation.
Our construction is relatively straightforward, yet we could not find
any prior work for that fragment.

\subsection{Syntactic Sub-Classes of LTL}\label{sec:classes}

We assume that the reader is familiar with \LTL~\cite{pnueli.77.focs}.
The following grammar (where parentheses have been omitted, and $p\in\CP$)
defines four syntactic fragments~\cite{chang.92.icalp,cerna.03.mfcs}:\label{grammar-rules}
\begin{align*}
  \varphi_B ::={}& \ffalse\mid\ttrue\mid p\mid\lnot\varphi_B\mid\varphi_B\land\varphi_B
                   \mid\varphi_B\lor\varphi_B\mid\varphi_B\liff\varphi_B
                   \mid\varphi_B\lxor\varphi_B\mid\varphi_B\limplies\varphi_B
                   \mid\X\varphi_B \\
  \varphi_G ::={}& \varphi_B\mid \lnot\varphi_S\mid
                   \varphi_G\land \varphi_G\mid \varphi_G\lor \varphi_G
                   \mid\varphi_S\limplies\varphi_G\mid
                   \X\varphi_G \mid \F\varphi_G\mid
                   \varphi_G\U\varphi_G\mid \varphi_G\M\varphi_G \\
  \varphi_S ::={}& \varphi_B\mid \lnot\varphi_G\mid
                   \varphi_S\land \varphi_S\mid \varphi_S\lor \varphi_S
                   \mid\varphi_G\limplies\varphi_S\mid
                   \X\varphi_S \mid \G\varphi_S\mid
                   \varphi_S\R\varphi_S\mid \varphi_S\W\varphi_S \\
  \varphi_O ::={}& \varphi_G \mid \varphi_S\mid \lnot\varphi_O\mid
                   \varphi_O\land \varphi_O\mid \varphi_O\lor \varphi_O\mid
                   \varphi_O\liff \varphi_O\mid \varphi_O\lxor \varphi_O\mid
                   \varphi_O\limplies \varphi_O \mid \X\varphi_O \\
           \mid{}& \varphi_O\U\varphi_G\mid
                   \varphi_O\R\varphi_S\mid
                   \varphi_S\W\varphi_O\mid
                   \varphi_G\M\varphi_O
\end{align*}

Formulas built using the rule for $\varphi_S$ (resp. $\varphi_G$) are
syntactic safety (resp. syntactic guarantee) and also correspond to
the class $\Sigma_1$ (resp. $\Pi_1$) defined by Esparza et
al.~\cite{esparza.24.acm}.  The ``bottom'' class, built using the
$\varphi_B$ rule, is the intersection of these two syntactic classes; it differs from Esparza's $\Delta_0$ class in that it allows $\X$ operators.  If we ignore the last four options in $\varphi_O$, the $\varphi_O$ rule represents Boolean combinations of syntactic safety and syntactic guarantee, and differs from Esparza's $\Delta_1$ class in that it is also closed under the $\X$ operator.  The last four options in $\varphi_O$ allow to capture more obligations syntactically~\cite{chang.92.icalp,cerna.03.mfcs}.

Note that we never require formulas to be in \emph{negative normal form} (NNF).  Translations that require NNF suffer a blowup from the removal of operators $\liff$ and $\lxor$.  Instead, since DWA are closed under Boolean operations, we support these operators naturally.

In what follows, let $\LTL_O(\CP)$ designate the set of syntactic obligations that can be built over atomic propositions $\CP$ and similarly for $\LTL_B(\CP)$, $\LTL_G(\CP)$, and $\LTL_S(\CP)$.

\subsection{Translation to MTDWA}\label{sec:tr}

We are going to build an MTDWA
$\langle \CQ, \CP, \iota, \Delta, \lambda \rangle$, where states
$\CQ\subseteq \LTL_O(\CP)$ are identified by syntactic obligations, and $\iota\in\LTL_O(\CP)$ is the formula we want to translate.

\paragraph{Computing successors using MTBDDs.}
Assuming $\boxed{\alpha}$ represents a terminal labeled by
$\alpha\in\LTL_O(\CP)$, the MTBDD representation of the successors
of a state, i.e., the function $\tr: \LTL_O(\CP) \to \MTBDD(\CP,\LTL_O(\CP))$, is
defined as follows:

\begin{gather*}
  \begin{aligned}
    \tr(\ttrue)&=\boxed{\ttrue}\quad & \tr(\ffalse)&=\boxed{\ffalse} & \quad\tr(a)=\begin{tikzpicture}[baseline=(a.base)]
    \node[circle,draw,inner sep=2pt](a) at (0,0) {$a$};
    \node[inner sep=0pt,overlay](0) at (-.4,-.8) {$\boxed{\ffalse}$};
    \node[inner sep=0pt,overlay](1) at (.4,-.8) {$\boxed{\ttrue}$};
    \draw[low,overlay](a) -- (0);
    \draw[high,overlay](a) -- (1);
  \end{tikzpicture}\quad\text{for any $a\in\CP$}\\
    \tr(\X\varphi) &= \boxed{\varphi} & \tr(\lnot\varphi) &= \lnot\tr(\varphi) \\
  \end{aligned}\\[1ex]
  \begin{aligned}
   \tr(\alpha \odot \beta) &= \tr(\alpha)\odot \tr(\beta)\mathrlap{\quad\text{for any Boolean operator $\odot\in\{\land,\lor,\limplies,\liff,\lxor,...\}$}}\\[-2pt]
   \tr(\alpha \U \beta) &= \tr(\beta)\lor(\tr(\alpha)\land\boxed{\alpha \U \beta}) &
   \tr(\alpha \M \beta) &= \tr(\beta)\land(\tr(\alpha)\lor\boxed{\alpha \M \beta}) \\[-2pt]
   \tr(\alpha \W \beta) &= \tr(\beta)\lor(\tr(\alpha)\land\boxed{\alpha \W \beta}) &
   \tr(\alpha \R \beta) &= \tr(\beta)\land(\tr(\alpha)\lor\boxed{\alpha \R \beta}) \\[-2pt]
   \tr(\F \varphi) &= \tr(\varphi)\lor\boxed{\F\varphi} &
   \tr(\G \varphi) &= \tr(\varphi)\land\boxed{\G\varphi} \\[-2pt]
\end{aligned}
\end{gather*}

\paragraph{Deciding acceptance of states.}
In this simple LTL fragment, the acceptance of a state
$\varphi\in\LTL_O(\CP)$ can be defined by simply considering the
Boolean combination of safety and guarantee formulas   that $\varphi$ represents.
This can be done using the function $\lambda(\varphi)$ defined below,
which checks the top-level operator of each
maximal temporal subformula of $\varphi$ (strong operators $\F$, $\U$, $\M$ are
rejecting; weak operators $\G$, $\W$, $\R$ are accepting) and builds their Boolean
combinations.
\begin{gather*}
  \begin{aligned}
    \lambda(\ffalse)=
    \lambda(\alpha \U \beta) =
    \lambda(\alpha \M \beta) =
    \lambda(\F\varphi) &= \bot &
    \lambda(\ttrue)=
    \lambda(\alpha \W \beta) =
    \lambda(\alpha \R \beta) =
    \lambda(\G\varphi) &= \top &
  \end{aligned}\\
  \lambda(\alpha \odot \beta) = \lambda(\alpha)\odot\lambda(\beta)\quad\text{for any Boolean operator $\odot\in\{\land,\lor,\leftarrow,\leftrightarrow,\oplus,...\}$} \\
  \begin{aligned}
    \lambda(\lnot\varphi) &= \lnot\lambda(\varphi) &
    \lambda(\X\varphi) &= \lambda(\varphi) &
    \lambda(a) &= * ~\text{for any $a\in\CP$} \\
  \end{aligned}
\end{gather*}
Value $*$ is a wildcard that should be ignored during Boolean operations, as shown below.
\begin{align*}
  \top\land * &= \top   & \bot \land * &= \bot  & * \land * &= * & \top\oplus * &= \top & \bot \oplus * &= \top \\[-2pt]
  \top\lor *  &= \top   & \bot \lor * &= \bot  & * \lor *  &= * & * \oplus * &= *  & \lnot * &= * \\[-2pt]
  \top\rightarrow * &= \bot & \bot \rightarrow * &= \top & * \rightarrow * &= * &
  * \rightarrow \top &= \top & * \rightarrow \bot &= \bot & \\[-2pt]
  \top\leftrightarrow * &= \top & \bot \leftrightarrow * &= \top & * \leftrightarrow * &= *
\end{align*}
A state $\varphi$ such that $\lambda(\varphi)=*$ is a Boolean
combination of atomic propositions, possibly with $\X$ operators.
As it cannot be part of a cycle, its acceptance can be arbitrary.

In our implementation $\ttrue$ and $\ffalse$ never occur as arguments of
Boolean operators or of $\X$: such formulas are automatically
simplified (e.g., trying to construct $\X(\ttrue)\land a$ will return
$a$).  Therefore, $\lambda(\ttrue)$ and $\lambda(\ffalse)$ are never called
recursively.


\paragraph{Ensuring termination with propositional equivalence.}\label{sec:propeq}
It is tempting to define our automaton
$\langle \CQ, \CP, \iota, \Delta, \lambda \rangle$
by letting $\Delta=\tr$, and by setting $\CQ$ to be equal
 to all the formulas that can be reached transitively
from $\iota$ by applying $\tr$. This does not work because the set $\CQ$ constructed this way is not necessarily finite.  \marginpar{Cf. App.~\ref{app:notfinite}} The classical solution, which we reuse, is to use propositional equivalence~\cite{esparza.18.lics,duret.25.ciaa}.

In previous work on \LTLf translation~\cite{duret.25.ciaa}, we used
propositional equivalence to simplify formulas created at every step
of the computation of $\tr(\cdot)$.  In the current implementation we
found that it was faster to do it as a separate pass: first compute
the MTBDD $m=\tr(\varphi)$, then replace all the leaves of $m$ by a
representative of their propositional-equivalence class. This way, we
avoid computing propositional equivalence for the intermediate
computations of $\tr(\cdot)$.

Additionally, as part of propositional equivalence we consider the
distribution of the $\X$ operators through Boolean operators.  E.g., we interpret $\X(\alpha \land \beta)\lor \X(\beta)$ as $(\X(\alpha) \land \X(\beta))\lor \X(\beta)$, so it can be found to be equivalent to $\X(\beta)$.

From now on, we assume that $\tr$ includes propositional-equivalence,
so that $\CQ$, the set of formulas reachable transitively
from $\iota$ by applying $\tr$, is guaranteed to be finite.

\paragraph{Correctness of the construction.} While propositional
equivalence ensures that our construction terminates, its correctness
is established by the following lemma and theorem.

\marginpar{Cf. App.~\ref{app:proofoflemma}}%
\begin{restatable}{lemma}{structureofdwa}
\label{lem:structureofdwa}
  For any $\iota\in\LTL_O(\CP)$ let $D_\iota=\langle \CQ, \CP, \iota, \tr, \lambda \rangle$ be the automaton constructed as described above, keeping $\lambda:\CQ\to\BB\cup\{*\}$ (unlike in Definition~\ref{def:mtdba}).   The following statements hold:
  \begin{enumerate}
  \item If a state $\varphi\in\CQ$ is such that
    $\lambda(\varphi)=*$, this state belongs to a trivial SCC of $D_\iota$.
  \item For any bottom formula $\iota\in\LTL_B(\CP)$, the only possible non-trivial SCCs
    of $D_\iota$ are the states $\ttrue$ and $\ffalse$ (at least one of these has to exist).
  \item For any guarantee formula $\iota\in\LTL_G(\CP)$, all states $\varphi\in\CQ\setminus\{\ttrue\}$ that belong to non-trivial SCCs of $D_\iota$ have $\lambda(\varphi)=\bot$.
  \item For any safety formula $\iota\in\LTL_S(\CP)$, all states $\varphi\in\CQ\setminus\{\ffalse\}$ that belong to non-trivial SCCs of $D_\iota$  have $\lambda(\varphi)=\top$.
  \item For any obligation formula $\iota\in\LTL_O(\CP)$, all the states
    of any non-trivial SCC of $D_\iota$ get assigned the same value by $\lambda$, and this
    value is either $\top$ or $\bot$.
  \end{enumerate}
\end{restatable}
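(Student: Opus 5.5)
The plan is to study how formulas evolve along transitions of $D_\iota$. The key is a size/shape invariant: every terminal of $\tr(\varphi)$ is, up to propositional equivalence, a Boolean combination of \emph{successor atoms}. A successor atom is either a subformula $\psi$ with $\X\psi$ a subformula of $\varphi$, or a temporal subformula ($\U,\M,\W,\R,\F,\G$) of $\varphi$ itself, as boxed in the definition of $\tr$.

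The first step is to prove this invariant by structural induction on the rules defining $\tr$. A consequence is that reachable states are Boolean combinations over a finite closure set $\mathit{cl}(\iota)$. Each element of $\mathit{cl}(\iota)$ belongs to the same syntactic class as the position it occupies; for instance, inside a safety formula the arguments of $\U,\M,\F$ occur only under negation, i.e.\ as guarantee subformulas.

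Next I define a well-founded measure: the multiset of \emph{$\X$-depths}, counting the nesting of $\X$ outside any temporal operator. A step through $\tr$ strips one $\X$ from each purely-$\X$ atom. The maximal temporal atoms are either preserved (the boxed copy) or replaced by atoms with strictly smaller temporal nesting.

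The main claim is that along a cycle the set of maximal temporal atoms cannot decrease. It cannot increase either, since successors never create larger formulas. Hence every state of a non-trivial SCC is a Boolean combination of the same set $T$ of top-level temporal atoms, plus possibly $\ttrue/\ffalse$. Boolean combinations of $\X$-formulas over atomic propositions cannot persist, because their $\X$-depth strictly drops. This gives item~1 immediately, since $\lambda(\varphi)=*$ means there are no temporal atoms, so the state is left forever. It also gives item~2, since for bottom formulas only $\ttrue$ and $\ffalse$ remain eventually.

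For items~3 and~4, in a guarantee formula all maximal temporal atoms are strong operators ($\F,\U,\M$), or negations of weak ones, which become strong after $\lnot$ in $\lambda$. Positive Boolean structure then forces $\lambda=\bot$ unless the state equals $\ttrue$, and dually for safety. One subtlety is that guarantee grammar allows $\varphi_S\limplies\varphi_G$; here $\lambda$ gives $\top\to\bot=\bot$, which is consistent.

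For item~5, a state of a non-trivial SCC is a Boolean function $f(T)$. Since $\lambda$ evaluates each atom of $T$ to a fixed constant, all states built over the same $T$ compute $\lambda$ by substituting those constants.

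\textbf{Main obstacle.} Two states on the same cycle can be different Boolean functions over $T$, so they could evaluate differently under $\lambda$. I would handle this semantically. Each atom $t\in T$ that stays in the SCC must be "pending" forever along the cycle: a strong atom never fulfilled, a weak atom never violated. Along the infinite cyclic word, each state's truth value is therefore obtained by setting the strong atoms to false and the weak ones to true, which is exactly $\lambda$.

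Since all states on the cycle accept the same infinite cycle word (each is the residual of the previous one, and the word is fixed), they must get the same value. The argument must also check that $\lambda(\varphi)$ coincides with acceptance of the ultimately periodic run. This requires the fact that, under propositional equivalence, the representative's Boolean function over $T$ is preserved, so $\lambda$ is well defined on equivalence classes. I expect proving this semantic link carefully, especially for the special operators $\varphi_O\U\varphi_G$ etc., to be the hardest part.
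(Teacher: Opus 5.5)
Your arguments for items 1--4 are sound and essentially match the paper's: $\X$-depth for item~1, and positivity of the Boolean structure for items~3--4. Item~5, however, rests on two claims that are false.

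First, the set of top-level atoms is not constant on an SCC, and a successor can be a larger formula (this is why propositional equivalence is needed at all). For $\iota=(\F a)\U b$ we have $\tr(\iota)=b\lor((a\lor\F a)\land\iota)$. Hence $\iota\xrightarrow{\bar a\bar b}\F a\land\iota\xrightarrow{a\bar b}\iota$, so the states $\iota$ and $\F a\land\iota$ lie in one SCC but have atom sets $\{\iota\}$ and $\{\F a,\iota\}$. Cyclic states may also contain top-level propositions: from $\G\X b$ one reaches $b\land\G\X b$, which has a self-loop on $b$. This is what the wildcard of $\lambda$ is for, and it is not ``$T$ plus $\ttrue/\ffalse$''.

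Second, and decisively, an atom present in every state of a cycle need not be pending. Take $\iota'=(\X\F a)\U b$, so $\tr(\iota')=b\lor(\boxed{\F a}\land\boxed{\iota'})$. The state $\F a\land\iota'$ has a self-loop on $a\bar b$. Along it, the conjunct $\F a$ is fulfilled at every step and respawned by this unfolding, so $\F a$ is true on $(a\bar b)^\omega$ even though $\lambda$ sends it to $\bot$. The state's value on this word does equal $\lambda(\F a\land\iota')=\bot$, but only because $\F a$ is conjoined with the genuinely pending atom $\iota'$. Evaluating ``strong atoms false, weak atoms true'' is therefore not justified atom by atom.

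Consequently, the identity ``truth of $q$ on its cycle word equals $\lambda(q)$'' is unproved, and all of item~5 depends on it. Given the one-step expansion law, it is really just a restatement of item~5. What is missing is a structural reason why atoms that are respawned by an enclosing temporal operator, and fulfilled or violated infinitely often along a cycle, can never flip the value.

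The paper supplies this by induction on the grammar:
\begin{itemize}
\item Boolean connectives and $\X$ are handled by closure of weak automata under products.
\item For the special productions such as $\psi^o\U\psi^g$, it uses the normal form $(\psi^o\U\psi^g\land\bigwedge_j\psi^o_{j,i})\lor\bigvee_j(\psi^g_{j,i}\land\bigwedge_k\psi^o_{k,i})$ of reachable states. In this form, every $\psi^o$-residual is conjoined either with the pending $\U$-atom or with a $\psi^g$-residual.
\item It combines this with the induction hypothesis on $D_{\psi^o}$ and $D_{\psi^g}$: a $\psi^g$-residual becomes accepting only by becoming $\ttrue$, which is irreversible.
\end{itemize}
Your semantic route could be completed only by proving a lemma of this kind about how operand residuals are nested in reachable states. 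The ``pending atoms'' shortcut does not provide it.
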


Since the states $\varphi$ for which $\lambda(\varphi)=*$ cannot
be part of a cycle, their acceptance can be chosen arbitrarily.
In Section~\ref{sec:mtminim}, we show that this choice is constrained
if we want to minimize the automaton.  For now, in order to formalize this definition, we declare
  those states as rejecting using the function $\lambda':\CQ\to\BB$
  defined as follows: \[
    \lambda'(\varphi)=\begin{cases}
      \top & \text{if}~\lambda(\varphi)=\top\\
             \bot & \text{if}~\lambda(\varphi)\in\{\bot,*\}
    \end{cases}
  \]

\begin{theorem}
    \label{cor:Correctness of MTDWA translation}
    For every $\iota\in\LTL_O(\CP)$ the automaton $D_\iota=\langle \CQ, \CP, \iota, \tr, \lambda' \rangle$ constructed above is an MTDWA, and it satisfies $L(\iota)=L(D_\iota)$.
\end{theorem}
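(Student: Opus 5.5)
The proof has two parts: weakness of $D_\iota$ and language equality. Weakness follows almost directly from Lemma~\ref{lem:structureofdwa}. By item~5, every non-trivial SCC of $D_\iota$ has all its states assigned the same value by $\lambda$, and that value lies in $\{\top,\bot\}$. On such states $\lambda'$ coincides with $\lambda$, so every cycle of a non-trivial SCC is either entirely accepting or entirely rejecting. Trivial SCCs contain no cycles, so the choice $\lambda'(\varphi)=\bot$ for $\lambda(\varphi)=*$ (item~1) is harmless. Finiteness of $\CQ$ is guaranteed by propositional equivalence, so $D_\iota$ is a genuine MTDWA. Determinism is immediate because $\tr(\varphi)$ is a single MTBDD mapping each letter to exactly one terminal.

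For language equality, I first establish the one-step expansion property. For every $\varphi\in\LTL_O(\CP)$, every letter $a\in 2^{\CP}$ and every word $w$, we have $a\cdot w\models\varphi$ iff $w\models\tr(\varphi)(a)$. I prove this by structural induction on $\varphi$, using the standard expansion laws: $\alpha\U\beta\equiv\beta\lor(\alpha\land\X(\alpha\U\beta))$, similarly for $\M,\W,\R,\F,\G$, the semantics of $\X$, and the fact that MTBDD operations compute the pointwise Boolean combination of terminals. Replacing leaves by propositionally equivalent representatives preserves this property, including the distribution of $\X$ over Boolean connectives, which is LTL-valid. Iterating the property, for a run $\varphi_0=\iota,\varphi_1,\ldots$ on $w=a_0a_1\cdots$, we get $w\models\iota$ iff $w^{i}\models\varphi_i$ for every $i$. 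This is only a safety-style invariant, though. The acceptance condition is what pins down the truth value.

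The main obstacle is showing that the run is accepting iff $w\models\iota$. Since $D_\iota$ is weak and finite, the run eventually stays in one SCC $C$. If $C$ is trivial, the run ends in a state with a self-loop only; these are exactly the sinks $\ttrue$ and $\ffalse$ by item~2, so the result follows. Otherwise, all states of $C$ have the same $\lambda\in\{\top,\bot\}$. I would prove a stronger claim by induction on the structure of the states in $C$: if the run stays forever in $C$ and $\lambda=\top$ on $C$, then $w^i\models\varphi_i$ for large $i$; if $\lambda=\bot$, then $w^i\not\models\varphi_i$.

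The idea for this claim is to view each state as a Boolean combination of maximal temporal subformulas, which is what $\lambda$ evaluates. Staying in $C$ forever means that each top-level strong subformula ($\U,\M,\F$) is regenerated forever, so its eventuality is postponed forever and it is false. Dually, each top-level weak subformula ($\W,\R,\G$) is never discharged into $\ffalse$, so it is true. Items~3 and~4 give exactly this for pure guarantee and safety parts. The Boolean combination then evaluates as $\lambda$ says, with the $*$ parts being purely propositional and already consumed.

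I expect the delicate part to be making the notion of "regenerated forever" precise once propositional-equivalence representatives are substituted. I would handle it by tracking, for each maximal temporal subformula $\psi$ of $\varphi_i$, the boxed copy $\boxed{\psi}$ that persists along the cycle. I would then argue via the expansion identity that persistence of a strong $\boxed{\alpha\U\beta}$ forces $\beta$ to fail at every position (possibly using the induction hypothesis on $\beta$), and dually for weak operators.
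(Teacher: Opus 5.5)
Your outline is the paper's own, which is itself only a two-line sketch: weakness from item~5 of Lemma~\ref{lem:structureofdwa}, and language equality from the one-step expansion property plus a structural induction. Your weakness argument and the invariant $w\models\iota$ iff $w^i\models\varphi_i$ are fine. Your target claim is also correct and sufficient: on a run confined to a $\top$-SCC the states are eventually true, and on a $\bot$-SCC they are false.

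The gap is in how you propose to prove that claim. You assert that along such a run every maximal strong subformula of the current state is false and every maximal weak one is true, by tracking a persisting boxed copy of each. This is false, already for syntactic guarantees. Take $\iota=(\F a)\U d$:
\begin{itemize}
\item $\tr(\iota)$ maps $\bar a\bar d$ to $\F a\land\iota$ and $a\bar d$ to $\iota$;
\item $\tr(\F a\land\iota)$ maps $a\bar d$ to $\iota$ and $\bar a\bar d$ to $\F a\land\iota$;
\item so $C=\{\iota,\F a\land\iota\}$ is a $\bot$-SCC.
\end{itemize}
On $w=(\bar a\bar d\cdot a\bar d)^\omega$ the run alternates between these two states and never leaves $C$. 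Yet at every visit to $\F a\land\iota$ the letter read contains $a$, so the top-level strong subformula $\F a$ is true there infinitely often. No copy of $\boxed{\F a}$ persists: each is discharged on the next step. The state is false only because $\F a$ is conjoined with the persistent $\boxed{\iota}$. Dually, for $(\G a)\R e$ the $\top$-SCC $\{(\G a)\R e,\ \G a\lor(\G a)\R e\}$ is traversed by $(ae\cdot\bar a e)^\omega$, with $\G a$ false at every visit to the second state. In particular, items~3 and~4 do not ``give exactly this'' for guarantee and safety parts: they constrain $\lambda$, not truth values.

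Two ingredients are missing.
\begin{enumerate}
\item[(i)] A finite-witness lemma: if a syntactic guarantee holds on $w^j$, its derivative along $w$ reduces to $\ttrue$ after finitely many letters. Dually, a violated syntactic safety formula reduces to $\ffalse$.
\item[(ii)] A structural analysis showing that the transient operand-derivatives, which the temporal operators re-spawn at every step, only occur where they cannot change the value predicted by $\lambda$.
\end{enumerate}
For $\alpha\U\beta$, states have the shape $(\alpha\U\beta\land\bigwedge_{j<i}\alpha_{j,i})\lor\bigvee_{j<i}(\beta_{j,i}\land\bigwedge_{k<j}\alpha_{k,i})$. Suppose the box survives forever in the state's Boolean function. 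Then no $\beta_{j,i}$ ever reduces to $\ttrue$, since its disjunct would absorb the box's disjunct, whose $\alpha$-conjunction is stronger. By (i), $\beta$ then fails at every position, so every disjunct is false and the $\alpha$-derivatives are irrelevant. This is the state-shape reasoning the paper uses for item~5, but it must be carried out for truth values, for every temporal rule of the grammar, and then propagated through Boolean combinations.

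A smaller slip: your ``trivial $C$'' case is vacuous, since a run confined to $C$ forever traverses a cycle. Moreover, item~2 concerns only bottom formulas. Single-state SCCs with self-loops include $\{\G a\}$ and $\{\F a\}$, not just the sinks $\ttrue$ and $\ffalse$.
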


\begin{proof}
    The fact that $D_\iota$ is weak is a corollary that follows directly from Lemma~\ref{lem:structureofdwa}.
    The language $L(D_\iota)$ being equivalent to $L(\iota)$ follows by noticing that $\tr$ implements the local truth of LTL formulas and by induction on the structure of formulas.\qed
\end{proof}

\pagebreak[3]
\subsection{Minimizing MTDWA}\label{sec:mtminim}

Contrary to DBAs~\cite{schewe.10.fsttcs}, DWAs can be minimized in
polynomial time~\cite{loding.01.ipl}.  In fact, after a very cheap
preprocessing to decide for each transient state (i.e., a state that
cannot be part of a loop) whether it should be accepting or rejecting,
a DWA can be minimized as if it were a DFA.
Löding's preprocessing~\cite{loding.01.ipl} is based on a simple ranking
function over the strongly connected components (SCC) of the DWA, such that the parity of the rank indicates the acceptance of the SCC.
We can perform an SCC decomposition over the MTDWA directly, by simply
interpreting the forest of MTBDDs (which is normally acyclic) as a
graph where a leaf $\boxed{\alpha}$ is connected to the root of
$\Delta(\alpha)$.

The minimization itself is easily done over MTBDDs by using a variant
of Moore's partition-refinement algorithm~\cite{moore.56.as} similar
to what is implemented in Mona~\cite{henrikson.95.tacas}.  Assuming that
each state labeled by $\alpha$ of the MTDWA is assigned to block
$B(\alpha)\in\NN$ in the partition of the current iteration, the next
partition can be found by creating for each state $s$ a temporary
$\Delta'(s)\in\MTBDD(\CP,\NN)$ in which all leaves $\boxed{\alpha}$ of
$\Delta(s)$ are replaced by $\boxed{B(\alpha)}$, and partitioning the
set of states according to the different MTBDDs in $\Delta'$.

An optimization not mentioned in Löding's paper is that the only
states that can be merged during minimization are those that share the
same rank.  Consequently, Löding's ranking function can be used
to define the initial partition of the minimization.

\section{$\LTL_O$ Reactive Synthesis}\label{sec:synthesis}

A reactive controller can be thought of as an electronic circuit that
produces output signals based on a history of input signals.  LTL
Reactive Synthesis is the problem of building such a controller
(usually as a Mealy machine) such that the combined histories of input
and output signals satisfy some given LTL specification.

We implement synthesis to AIGER
circuits~\cite{biere.11.tr,renkin.23.fmsd}, but for lack of space we
only discuss the ``realizability'' problem, defined as follows.
\begin{definition}[\cite{kupferman.focs.2005,finkbeiner.16.dsse}]\label{def:realizability}
  Given two disjoint sets of variables $\CI$ (inputs) and $\CO$
  (outputs), a controller is a function $\rho: (\BB^\CI)^*\to\BB^\CO$ that,
  given a history of assignments of input variables,
  produces an assignment of output variables.

  Given a word of input assignments $\sigma\in(\BB^\CI)^\omega$, the
  controller can be used to generate a word of output assignments
  $\sigma_\rho\in(\BB^\CO)^\omega$.
  The definition of $\sigma_\rho$ may
  use two semantics depending on whether we want the controller to have access to the current input assignment to decide the output assignment:
  \begin{description}[nosep]
  \item[Mealy semantics:] $\sigma_\rho(i)=\rho(\sigma(..i))$ for all $i\ge 0$.
  \item[Moore semantics:] $\sigma_\rho(i)=\rho(\sigma(..i-1))$ for all $i\ge 0$.
  \end{description}

  A formula $\varphi\in\LTL(\CI\cup\CO)$ is said to be \emph{Mealy-realizable}, or \emph{Moore-realizable}, if there exists a controller $\rho$ such that for every word $\sigma\in(\BB^\CI)^\omega$, it holds that $(\sigma\sqcup\sigma_\rho)\in\lang(\varphi)$ using the desired semantics.
\end{definition}
Formula $\varphi_1$ (from Figure~\ref{fig:example}) is both Mealy-realizable and
Moore-realizable.

For general LTL, synthesis or realizability can be reduced to the
translation of the specification into a deterministic parity
automaton, which is then interpreted and solved as a two-player game
with parity acceptance: one player plays the input signals,
the second player plays the output signals.  The specification is
realizable if the output player has a strategy to ensure that the
parity acceptance is satisfied.

For syntactic obligations, realizability and synthesis are a lot easier: any $\LTL_O$
formula can be converted to a DWA, which can also be interpreted as a
two-player game with \text{weak} acceptance.  Such games are known to
be solvable in linear time \cite[Section 6.1]{amram.21.cav}.  They can
also be reduced to the emptiness check of 1-letter weak alternating
automata~\cite[Theorem 4.7]{kupferman.00.acm}, or seen as a special
case of weak parity games~\cite{chatterjee.08.tr}.  These three
algorithms work similarly.  Since the automaton is weak, we know that
the cycles in one SCC are either all accepting (winning for the output
player) or all rejecting (winning for the input player), but except
for SCCs without successors, it might be possible for players to escape
an SCC that is losing for them.  The game is therefore solved by
enumerating the SCCs and processing them bottom-up: terminal SCCs can
be determined as winning for one player or the other according to
their acceptance.  Then the attractor of these states is grown by
backpropagation.  Moving up in the SCCs, undetermined states can be determined according to the acceptance of the SCC.

This algorithm can be implemented directly on the MTDWA
structure, by interpreting it as a game where the input (resp., output)
player makes the decision for the input (resp., output) variables, and where a
play that reaches $\boxed{\alpha}$ continues from the root node
associated with $\alpha$. For Mealy semantics, input variables should
be ordered to appear before output variables; for Moore semantics, it should be the
opposite.

We implement this game solving over MTDWA on-the-fly, together with
the automaton construction of Section~\ref{sec:translation}.
\marginpar{Cf. App.~\ref{app:synthesis-example}}
The setup is very similar to our previous solution for \LTLf
synthesis~\cite{duret.25.ciaa} where we had to construct a DFA
represented using MTBDDs, and interpret it as a reachability game
to be solved on-the-fly.   In a reachability
game, the exploration order is not important: this previous work used
a BFS just because we found it marginally better than DFS.  Here, for
obligation synthesis, we have to explore the states in a topological
order of their SCCs.
Enumerating SCCs during the on-the-fly construction of an automaton is a
well studied topic, and is typically used by on-the-fly explicit LTL
model
checkers~\cite{geldenhuys.05.tcs,gaiser.09.memics,renault.13.lpar}.
These algorithms are all based on a DFS exploration of the automaton.
Our implementation uses Dijkstra's SCC
algorithm~\cite{dijkstra.73.ewd376,dijkstra.76.dop}, and we construct
the MTDWA as the DFS progresses.  The implementation reuses the linear
backpropagation over incomplete graphs from previous
work~\cite[Algorithm 1]{duret.25.ciaa}.

\section{Implementation and Evaluation}\label{sec:evaluation}

The new constructions for translation, minimization, and game solving have been implemented~\cite{duret.26.zenodo}\marginpar{Cf. App.~\ref{app:artifact}} in Spot~\cite{duret.22.cav} and released as version 2.15.  Many graphical examples of these constructions can be found at
\url{https://spot.lre.epita.fr/ipynb/mtdswa.html}
Below, we evaluate how the translation and synthesis improved compared to the previous version.  We also include a comparison to a few third-party tools for reference.

\subsection{Translation}\label{sec:trans-benchmark}

\definecolor{color1}{HTML}{F8766D}
\definecolor{color2}{HTML}{B79F00}
\definecolor{color3}{HTML}{00BA38}
\definecolor{color4}{HTML}{00BFC4}
\definecolor{color5}{HTML}{619CFF}
\definecolor{color6}{HTML}{F564E3}
\def\pnewmin{\textcolor{color1}{\textsf{new-min}}}
\def\pnew{\textcolor{color3}{\textsf{new}}}
\def\poldmin{\textcolor{color2}{\textsf{old-min}}}
\def\pold{\textcolor{color4}{\textsf{old}}}
\def\ltltela{\textcolor{color6}{\textsf{ltl3tela}}}
\def\ltldela{\textcolor{color5}{\textsf{ltl2dela}}}

The new translation procedure is used by Spot's
\texttt{translator} class whenever the input formula is a
syntactic obligation and the user requests a deterministic output. For evaluation purposes, we use Spot's \texttt{ltl2tgba}
command-line tool, which converts LTL into various kinds of automata and outputs them in the HOA format~\cite{babiak.15.cav}.

\begin{figure}[tb]
\begin{tikzpicture}[link/.style={thick,->}]
  \node[draw, text width=2.6cm, align=center, rotate=90] (parse) {parse LTL};
  \node[draw, text width=2.4cm, align=center, rotate=90, right=4mm of parse.south,anchor=north] (simplify) {simplify};
  \node[draw, align=center, text width=2.4cm,right=14mm of simplify.south east,anchor=north west,minimum height=3.5em] (mttrans) {MTBDD-based translation (Sections~\ref{sec:tr}--\ref{sec:propeq})};
  \node[draw, align=center, text width=2.5cm,right=14mm of simplify.south west,anchor=south west,minimum height=3.5em] (couvreur) {Couvreur's translation~\cite{couvreur.99.fm,duret.14.ijccbs} (non-deterministic)};
  \node[draw, align=center,text width=3.2cm,right=3mm of mttrans.north east,anchor=north west] (mtbddmin) {MTBDD-based minimization (Section~\ref{sec:mtminim})};
  \node[draw, align=center,text width=1.5cm,minimum height=3.5em,right=3mm of mtbddmin.north east,anchor=north west] (conv) {conversion to explicit automata};
  \node[draw, align=center,text width=5cm,right=4mm of couvreur.north east,anchor=north west] (wdbamin) {powerset + DWA minimization~\cite{dax.07.atva}\strut};
  \node[draw, align=center,text width=5cm,right=4mm of couvreur.south east,anchor=south west] (safra) {Safra-like determinization~\cite{redziejowski.12.fi}\strut};
  \coordinate (theend) at ($(parse -| conv.east)+(4mm,0)$);
  \node[draw, text width=2.6cm, align=center, rotate=90, anchor=north] at (theend) (print) {print in HOA};
  \draw[rounded corners=1mm,dashed] ($(simplify.north west)+(-2mm,-1mm)$) |-
  ($(conv.north east)+(2mm,1mm)$) |- cycle;
  \draw[link] (parse) -- (simplify);

  \draw[link,color=color1] (simplify.south |- mtbddmin) -- node[at start,above right]{\textsf{new-min}}
              (mttrans.west |- mtbddmin);
  \draw[link,color=color1] (mttrans.east |- mtbddmin) -- (mtbddmin);
  \draw[link,color=color1] (mtbddmin) -- (conv.west |- mtbddmin);
  \draw[link,color=color1] (conv.east |- mtbddmin) -- (print.north |- mtbddmin);

  \draw[link,color=color3] (simplify.south |- conv.-152) -- node[at start,above right]{\textsf{new}}
              (mttrans.west |- conv.-152);
  \draw[link,color=color3] (mttrans.east |- conv.-152) -- (conv.-152);
  \draw[link,color=color3] (conv.east |- conv.-152) -- (print.north |- conv.-152);

  \draw[link,color=color2] (simplify.south |- wdbamin) -- node[at start,above right]{\textsf{old-min}}
              (couvreur.west |- wdbamin);
  \draw[link,color=color2] (couvreur.east |- wdbamin) -- (wdbamin);
  \draw[link,color=color2] (wdbamin) -- (print.north |- wdbamin);

  \draw[link,color=color4] (simplify.south |- safra) -- node[at start,above right]{\textsf{old}}
              (couvreur.west |- safra);
  \draw[link,color=color4] (couvreur.east |- safra) -- (safra);
  \draw[link,color=color4] (safra) -- (print.north |- safra);
\end{tikzpicture}
\caption{Four pipelines usable by \texttt{ltl2tgba} to translate a syntactic obligation:
  \pnewmin, \pnew, \poldmin, and \pold.  The \texttt{spot::translator} class implements the dashed box.\label{fig:trans-pipeline}}
\end{figure}

Figure~\ref{fig:trans-pipeline} shows four translation pipelines that \texttt{ltl2tgba} can use to translate syntactic obligations into DWAs.  The dashed rectangle corresponds to the aforementioned \texttt{translator} class: it translates an LTL formula into an explicit $\omega$-automaton, but it can be configured to use different approaches.  The path \poldmin{} corresponds to the default behavior of Spot~2.14: after some simple syntactic simplifications, the formula is converted into a non-deterministic Büchi automaton using Couvreur's translation algorithm~\cite{couvreur.99.fm,duret.14.ijccbs}, to obtain a weak NBA; the result is then determinized by powerset construction \marginpar{Cf. App.~\ref{app:classmembership}.} and minimized as described by Dax et al.~\cite{dax.07.atva}, and finally the resulting automaton is output in HOA.  The \pold{} pipeline is used when the DWA-minimization is explicitly disabled or not applicable: this path works for any input formula, not just obligations.  In that case, since the output of the translation is non-deterministic, but the goal is to have a deterministic automaton, the automaton is determinized using a Safra-based construction~\cite{redziejowski.12.fi}.\looseness=-1

In both \poldmin{} and \pold{}, automata are represented as
explicit graphs with edges labeled by Boolean formulas over
propositions (represented as BDDs)~\cite{duret.14.ijccbs,duret.22.cav}, as on the left-hand side of Figure~\ref{fig:example}.  In the new pipelines, called \pnewmin{} and \pnew{}, the MTBDD-based translation described in Section~\ref{sec:translation} produces an MTDWA as on the right-hand side of Figure~\ref{fig:example}, and that can optionally be minimized using the same data structure.  In order to output an MTDWA in the HOA format, we first convert this MTBDD-based representation into an explicit representation.  This conversion has a significant cost because it has to enumerate all paths in the MTBDDs.  Nonetheless, we will see that these new pipelines are still competitive, despite the costly conversion.

As far as we know, Spot is the only tool that can translate syntactic obligations into DWA that are guaranteed to be minimal (via the \poldmin{} or \pnewmin{} pipelines).  We can, however, compare to tools that produce deterministic $\omega$-automata even if they do not guarantee minimality.  We consider \ltltela~2.1.1~\cite{major.19.atva} and \ltldela{} from Owl 21.0~\cite{kretinsky.18.atva}.  (The latter is an evolution of \textsf{Delag}~\cite{muller.17.gandalf}, which we did not run.)

For a fair comparison, we set up all tools to read an LTL formula and produce an equivalent deterministic $\omega$-automaton (with any acceptance condition) in the HOA format.
We request complete automata from all tools except \ltltela{} (which lacks this option).
We measure the entire execution of each tool using
\href{https://spot.lre.epita.fr/ltlcross.html}{\texttt{ltlcross}} with a timeout of 60 seconds.  The experiment was run one task at a time on a dedicated Core i7-3770 with \emph{Turbo Boost} disabled, and the frequency scaled down to 1.6GHz to prevent CPU throttling.

We evaluated the translation pipelines on 494 syntactic obligations: 
\marginpar{Details in App.~\ref{app:trans-benchmark}}
\begin{itemize}[topsep=0pt]
\item 310 formulas representing instances of
  17 scalable patterns~\cite{geldenhuys.06.spin,cichon.09.depcos,geldenhuys.06.spin,kupferman.11.mochart,tabakov.10.rv};
\item 76 formulas collected from various works~\cite{dwyer.98.fmsp,somenzi.00.cav,etessami.00.concur,pelanek.07.spin,holevek.04.tr};
\item 108 unique formulas from the synthesis
  competition~\cite{jacobs.22.arxiv}: all the instances that are syntactic
  obligations, except for two formulas syntactically equivalent to
  $\ttrue$ or $\ffalse$.
\end{itemize}
Figure~\ref{fig:cactus-trans}, where the time axis is logarithmic, clearly shows how Spot's translation has improved over this set of formulas.  Note that \pnew{} and \pnewmin{} are barely distinguishable, suggesting that
the guarantee to build minimal automata comes at no extra cost.

\begin{figure}[tb]
  \includegraphics[width=\textwidth,alt={Cactus plot showing new and new-min configurations scaling better than other tools and older configurations}]{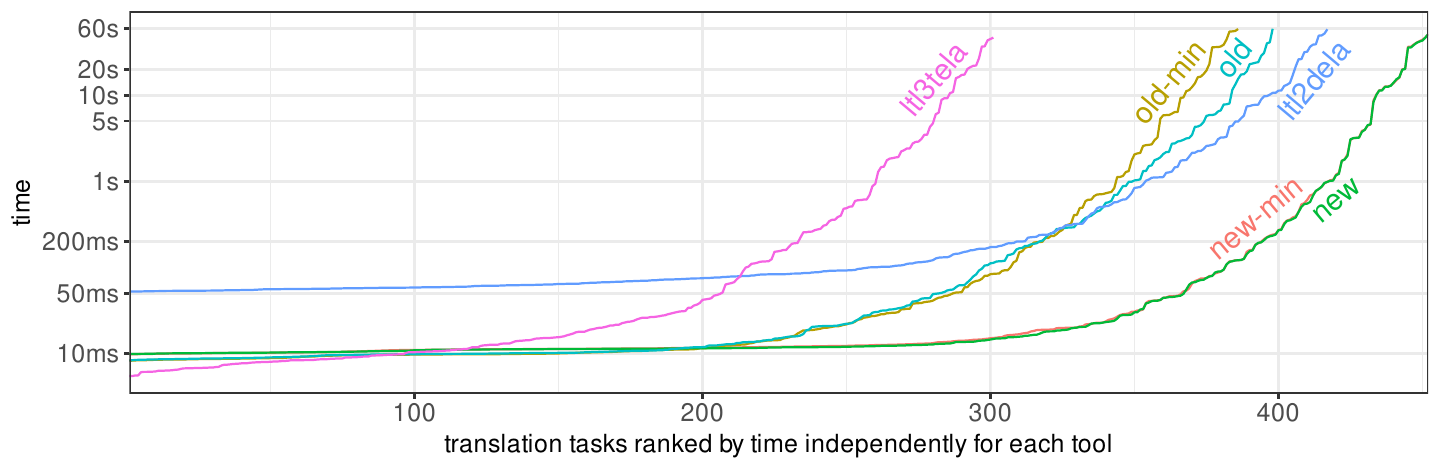}
  \caption{Comparison of translation tools on 494 syntactic obligations from the literature. \label{fig:cactus-trans}}
\end{figure}

Keep in mind, however, that we measured the runtime of a complete translation process, including the conversion to explicit automata (for \pnew{} and \pnewmin) and HOA output (for all tools).  We now turn to the evaluation of how the MTBDD-based representation of DWA improves synthesis: in this case, we do not need to convert automata to an explicit representation, saving a lot of time.

\subsection{Synthesis}\label{sec:bench-synthesis}

\def\newtrans{\textcolor{color1}{\textsf{(new trans.)}}}
\def\newsynt{\textcolor{color3}{\textsf{(new synt.)}}}
\def\oldtrans{\textcolor{color2}{\textsf{(old trans.)}}}
\def\strix{\textcolor{color4}{\textsf{Strix}}}
\def\semml{\textcolor{color6}{\textsf{SemML}}}
\def\ssyft{\textcolor{color5}{\textsf{SSyft}}}

Previous versions of Spot's \texttt{ltlsynt} converted an LTL \marginpar{Cf. App.~\ref{app:ltlsyntarch}}specification by translating it to a deterministic parity automaton (DPA), turning this DPA into a two-player game, and solving the game~\cite{renkin.23.fmsd}.  In the case of syntactic obligations, the DPA was obtained by the \poldmin{} translation pipeline in Figure~\ref{fig:trans-pipeline}.  This default behavior of \texttt{ltlsynt}~2.14 is what is called \oldtrans{} in this section.

The improvements described in this paper allow us to define two better configurations of \texttt{ltlsynt}.  \newtrans{} designates a configuration where the \texttt{spot::trans\-la\-tor} class, used to produce the DPA, is changed to use the \pnewmin{} translation pipeline in Figure~\ref{fig:trans-pipeline}, where the automaton is still converted to an explicit representation, to be solved as a game.  The \newsynt{} configuration corresponds to the new
default of \texttt{ltlsynt}, where syntactic obligations are converted into an MTBDD-based representation, and the game is solved directly on that representation as described in this paper.

\begin{figure}[tb]
  \includegraphics[width=\textwidth,alt={Cactus plot showing new synthesis and new translation scaling better than other tools and older configurations}]{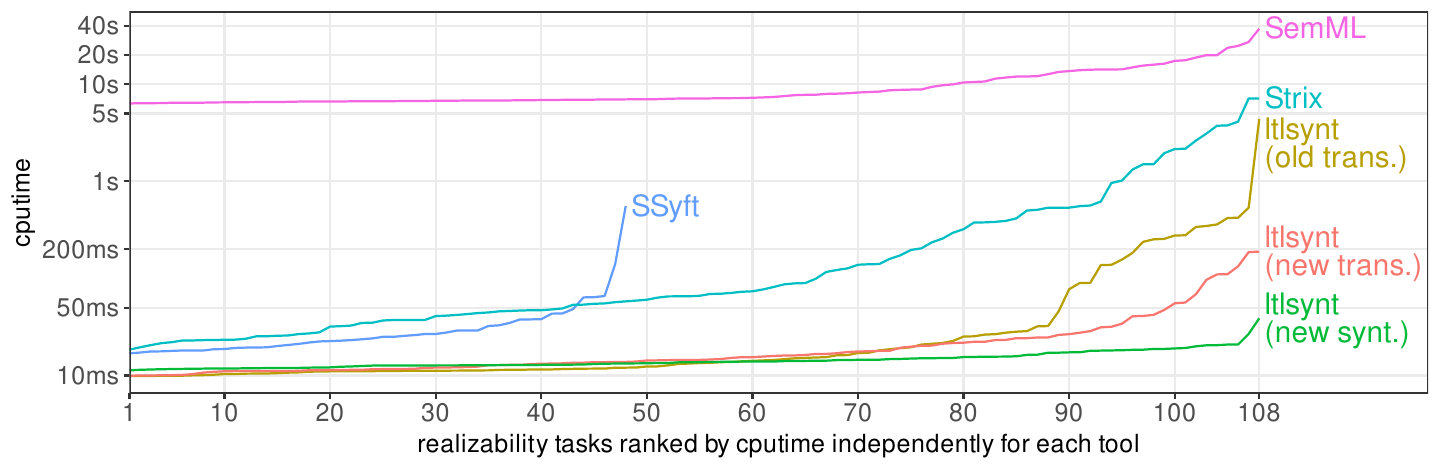}
  \caption{Comparison of the three configurations of \texttt{ltlsynt} over the 108 specifications from SyntComp that are syntactic obligations.  Third-party tools \strix{}, \semml{}, and \ssyft{}
  are included for reference.\label{fig:cactus-time}}
\end{figure}

Figure~\ref{fig:cactus-time} shows the improved runtimes as a cactus plot.  For this experiment, we used BenchExec~3.22~\cite{beyer.19.sttt} to track time and memory usage.
\marginpar{See App.~\ref{app:mem} for\\memory usage.}
We include \strix~21.0~\cite{luttenberger.19.ai} (the winner of
SyntComp'23) and \semml~\cite{kretinsky.25.tacas} (the winner of SyntComp'24) for comparison.  (The winner of SyntComp'25 was~\texttt{ltlsynt} due to a technical defect in \semml.)\marginpar{See App.~\ref{app:semml}\\about \semml.}
Moreover,
\href{https://github.com/Shufang-Zhu/Syft-safety-compositional}{\ssyft}~\cite{suguman.23.vstte},
patched to use the Mealy semantics, is included as an example of a tool that is dedicated to syntactic-safety formulas: there are 48 of those in this benchmark.  All tools were configured to check the specification for realizability: this way we do not include any time spent generating and outputting controllers.

\section{Conclusion}

We have shown that the MTBDD-based translation has made
\texttt{ltl2tgba} faster at converting syntactic obligations to (weak)
DBAs.  This holds even when we switch back to an explicit
representation.  In the case of synthesis, the latter switch back is
not required and the weak game can be solved directly on the
MTBDD-based representation, providing an even greater speedup.

The techniques described target syntactic obligations, so we focused
our experiments on those.  We expect, however, that those improvements
will impact other types of specifications as well.  Firstly, when a
specification can be seen as a conjunction of output-disjoint
sub-specifications, \texttt{ltlsynt} solves each sub-specification
independently~\cite{finkbeiner.21.nfm,renkin.23.fmsd}.\marginpar{See
  App.~\ref{app:decomp}.}\label{sec:decompmention} Therefore, even if
the full specification is not a syntactic obligation, it could contain
a sub-specification that is a syntactic obligation, and that will be
solved using the improved MTBDD-based technique.  Secondly, a possible
way to obtain a DPA is to first translate the specification into a
deterministic Emerson-Lei automaton (DELA), which can then be
converted into a DPA~\cite{renkin.20.atva,casares.22.tacas}.  To
obtain a DELA, Spot generalizes the compositional translation used by
the \texttt{Delag} tool~\cite{muller.17.gandalf}: the formula to
translate is broken into top-level Boolean operators, subformulas are
translated to deterministic automata according to their nature, and
those automata are then recombined.  In this approach, subformulas
that are syntactic obligations will now be translated much more
efficiently.

Obvious future work would be to generalize the MTBDD-based
translation, so that all LTL formulas can be turned into MTBDD-based
DELA, probably using $\Delta_2$-normalization~\cite{esparza.24.acm},
converting top-level temporal formulas to Büchi and co-Büchi
automata, and using Boolean combinations for the final automaton.

\begin{credits}
\subsubsection{Data Availability Statement}
%
Code and benchmark data are archived on Zenodo~\cite{duret.26.zenodo}.

\subsubsection{\ackname}
NP is supported by Swedish Research Council (VR) project (No. 2020-04963) and the Wallenberg AI, Autonomous Systems and Software Program (WASP) funded by the Knut and Alice Wallenberg Foundation. GDG is partially supported by the EPSRC grant EP/Y028872/1, Mathematical Foundations of Intelligence: An Erlangen Programme for AI.

\subsubsection{\discintname}
%
The authors have no competing interests to declare that are relevant
to the content of this article.
\end{credits}

\bibliographystyle{splncs04}
\bibliography{biblio}

\ifappendix
\newpage
\appendix

These appendices and the margin notes that point to them are for the
interested reviewers, but are not meant to be part of the final
version of the article.

\section{Obligation Frequency}\label{app:obligfreq}

\begin{table}[tb]
  \centering
\begin{tabular}{lccccccccccccc}
  \toprule
                                             &               & ~~~~~ & \multicolumn{5}{c}{property classes} & ~~~~~                      & \multicolumn{5}{c}{syntactic classes}                                                                                                      \\
  source                                     & set size      &        & $O$                                  & $O{\setminus}S{\setminus}G$ & $S{\setminus}B$   & $G{\setminus}B$  & $B$      &  & $O$           & $O{\setminus}S{\setminus}G$ & $S{\setminus}B$ & $G{\setminus}B$ & $B$ \\
  \cmidrule{1-2}
  \cmidrule{4-8}
  \cmidrule{10-14}
SyntComp~\cite{jacobs.22.arxiv}              & 959           &        & $\ge$262                             & $\ge$172                    & ${\ge}$73         & ${\ge}$7         & ${\ge}$7 &  & 110           & 59                          & 48              & 1               & 2   \\
Dwyer et al.~\cite{dwyer.98.fmsp}            & \phantom{0}55 &        & \phantom{$\ge$0}40                   & \phantom{$\ge$00}2          & \phantom{$\ge$}37 & \phantom{$\ge$}1 &          &  & \phantom{0}25 & 13                          & 11              & 1               &     \\
Somenzi\&Bloem~\cite{somenzi.00.cav}         & \phantom{0}27 &        & \phantom{$\ge$0}16                   & \phantom{$\ge$00}2          & \phantom{$\ge$0}6 & \phantom{$\ge$}6 & \phantom{$\ge$}2        &  & \phantom{0}13 & \phantom{0}4                & \phantom{0}4    & 5               &     \\
Etessami\&Holzmann~\cite{etessami.00.concur} & \phantom{0}12 &        & \phantom{$\ge$00}5                   &                             &                   & \phantom{$\ge$}4 & \phantom{$\ge$}1        &  & \phantom{00}5 &                             &                 & 5               &     \\
BEEM~\cite{pelanek.07.spin}                  & \phantom{0}20 &        & \phantom{$\ge$0}10                   & \phantom{$\ge$00}1          & \phantom{$\ge$0}6 & \phantom{$\ge$}1 & \phantom{$\ge$}2        &  & \phantom{00}7 & \phantom{0}2                & \phantom{0}4    & 1               &     \\
  Liberouter~\cite{holevek.04.tr}            & \phantom{0}55 &        & \phantom{$\ge$0}30                   &                             & \phantom{$\ge$}27 & \phantom{$\ge$}1 & \phantom{$\ge$}2        &  & \phantom{0}26 & \phantom{0}1                & 23              & 2               &     \\
  \bottomrule
\end{tabular}
\caption{\strut Various collections of LTL formulas classified in the lower-fragments of the temporal hierarchy, and in their syntactic fragments.  Empty cells are 0.\label{tab:obligfreq}}
\end{table}
Table~\ref{tab:obligfreq} shows the frequency of obligation formulas
in various sets of LTL formulas.  For each set, the column ``size'' is
the number of formulas in the set, and the remaining columns show how
many formulas fall into the various lower classes of the temporal
hierarchy, or in their syntactical fragments.

The letters $O$, $S$, $G$, $B$ refer respectively to obligations,
safety, guarantee, and bottom classes, where ``bottom'' is the
intersection of $S$ and $G$.  Since $O$ includes $S$ and $G$ and both
include $B$, we report instead on the difference between these sets.
E.g. $O{\setminus}S{\setminus}G$ counts obligation properties that are
neither safety nor guarantee properties.

The ``property class'' columns were obtained by running the
obligation detection algorithm presented in
Appendix~\ref{app:classmembership}.  This classification is incomplete
on the formulas from the Synthesis Competition because some of these
formulas are too big to be processed by Spot's classifier.  The lower
bounds provided were computed with a timeout of 60 seconds.  Out of
the 959 formulas, 262 (27.3\%) were classified as obligations, 401
(41.8\%) as not obligation (high classes in the hierarchy), and the
remaining 296 (30.9\%) formulas could not be classified either because
the classification would take more than 60 seconds, or because of
other errors from Spot (e.g., a translation requiring more than 32
acceptance sets).

The ``syntactic class'' columns were obtained by checking whether the
syntax tree of each formula follow the grammar rules from
Section~\ref{grammar-rules}.

For instance, in the 27 formulas collected by
Somenzi\&Bloem~\cite{somenzi.00.cav}, 16 represent obligation
properties.  In these 16 properties, 2 are obligations that are neither
safety nor guarantee, 6 are safety properties that are not guarantee
properties, 6 are guarantee properties that are not safety properties, and 2 are
both guarantee and safety.\footnote{The two formulas from
  Somenzi\&Bloem~\cite{somenzi.00.cav} that are in $B$ are:
  ``$(\X p_0 \U \X p_1) \lor \lnot\X(p_0 \U p_1)$'' and
  ``$(\X p_0 \U p_1) \lor \lnot \X(p_0 \U (p_0 \land p_1))$''.  They belong to $B$ because they are both tautologies.  Syntactically, however, the lowest fragment they belong to is $O$.}  If we consider only the syntactic characterization of these classes, 13 formulas are syntactic obligations, 4 are syntactic obligations that are neither syntactic-safety nor syntactic-guarantee formulas, 4 are syntactic-safety formulas that are not syntactic-guarantee formulas, 4 are syntactic-guarantees that are not syntactic-safety, and 0 formulas are both syntactic-safety and syntactic-guarantee.

We can see in that table that obligations represent a frequently
used class of formulas, and that, although the syntactic obligations
do not capture all obligations, they represent a large portion of those.

\section{Deciding Membership to the Classes of the Temporal Hierarchy}
\label{app:classmembership}

To play and get familiar with Manna \& Pnueli's hierarchy, the reader is
invited to go to \url{https://spot.lre.epita.fr/app/}, click on the
``\textsc{study}'' tab, and enter any LTL formula.   That should
display the class this formula belongs to, among other things.

For instance, the formula ``\verb+(a xor b) R (a U b)+'' will be
detected as an obligation property even if it is not a
syntactic obligation.  One equivalent syntactic obligation is the
formula ``\verb+(a xor b) M (a U b) | G(a & b)+''.

The classification performed on that page is done by Spot.  To
detect obligation properties, Spot implements a variation on the
technique of Dax et al.~\cite[Section 3.2]{dax.07.atva}.  Our
implementation improves upon the mentioned technique in the way it
finds accepting SCCs, and by using only a single inclusion check at
the end.  Given some input formula $\varphi$ for which we would like
to know if it is an obligation, Spot proceeds as follows:
\begin{enumerate}
\item translate $\varphi$ into a non-deterministic Büchi automaton
  $N_\varphi$,
\item ignoring the acceptance of states, use the powerset construction on
  $N_\varphi$ to obtain the deterministic \emph{structure} $D_\varphi$,
\item any SCC of $D_\varphi$ that intersects an accepting SCC of
  $N_\varphi$ in the synchronous product of
  $N_\varphi\otimes D_\varphi$ should be marked in $D_\varphi$ as
  fully accepting
\item at this point $D_\varphi$ is a DWA such that $\lang(\varphi) \subseteq \lang(D_\varphi)$ (because its accepting SCCs possibly capture
  runs that were not accepted by $N_\varphi$).
\item test $\lang(D_\varphi)\subseteq\lang(\varphi)$ by constructing a
  non-deterministic Büchi automaton $N_{\lnot\varphi}$ for the
  negation of $\varphi$ and then ensuring that the product
  $N_{\lnot\varphi}\otimes D_\varphi$ is empty.
\end{enumerate}
The success of the last inclusion check implies that
$\lang(D_\varphi)=\lang(\varphi)$.  Since $D_\varphi$ is a DWA, this
in turn implies that $\varphi$ is an obligation.  After minimizing this
DWA, we can also detect guarantee properties (the only accepting state
is an accepting sink) or safety properties (the only rejecting state is
a rejecting sink).

As the size of $N_\varphi$ could be exponential in $|\varphi|$, the size of $D_\varphi$ is, in the worst case, doubly exponential.
It follows that the algorithm above can be implemented in exponential space if $D_\varphi$ is constructed on the fly.

Detection of higher classes in the hierarchy uses other techniques that
are out of the scope of this paper.

\section{Why Propositional Equivalence is Needed}\label{app:notfinite}

Consider $\varphi_1=(\G a)\W (\G b)$.
Following the branch $a\land b$ in the MTBDD $\tr(\varphi_1)$ leads to a leaf labeled
by $\varphi_2=(\G b) \lor ((\G a) \land ((\G a)\W (\G b)))$.
Following $a\land b$ in the MTBDD $\tr(\varphi_2)$ leads to a leaf labeled
by $\varphi_3=(\G b) \lor ((\G a) \land ((\G b) \lor ((\G a) \land ((\G a)\W (\G b)))))$, etc.
Those formulas will keep growing.

Propositional equivalence can be defined as follows:

\begin{definition}[Propositional Equivalence~\cite{esparza.18.lics}]\label{def:propequiv}
  For $\varphi\in\LTL(\CP)$, let $\varphi_P$ be the
  Boolean formula obtained from $\varphi$ by replacing every
  maximal temporal subformula $\psi$ by a Boolean variable $x_\psi$.
  Two formulas $\alpha,\beta\in\LTL(\CP)$ are
  \emph{propositionally equivalent}, denoted $\alpha\equiv \beta$, if
  $\alpha_P$ and $\beta_P$ are equivalent Boolean formulas.
\end{definition}

If we compute the Boolean formulas $\varphi_{1p}$, $\varphi_{2p}$,
$\varphi_{3p}$ obtained by replacing $\G a$, $\G b$, and
$(\G a)\W (\G b)$ by three different Boolean variables, and if we
represent $\varphi_{1p}$, $\varphi_{2p}$, $\varphi_{3p}$ as BDDs, we
can see right away that $\varphi_{2p}=\varphi_{3p}$.  Therefore
$\varphi_{2}$ and $\varphi_{3}$ are propositionally equivalent, and we
can replace the latter by the former, keeping the set of reachable
formulas finite in the translation.

\section{Proof of Lemma~\ref{lem:structureofdwa}}\label{app:proofoflemma}

\structureofdwa*

\begin{proof}~\\[-1em]
    \begin{enumerate}
    \item[1.] By definition of $\lambda$, a state $\varphi\in\CQ$ such
      that $\lambda(\varphi)=*$ is necessarily a Boolean combination
      of atomic propositions, with the possible addition of $\X$
      operators.  By the definition of $\tr$, if $\varphi$ is a purely
      Boolean combination of atomic propositions (i.e., no $\X$), its
      successors can only be $\ttrue$ or $\ffalse$.  Additionally, if
      $\varphi$ contains $n$ nested $\X$, its successors will have at
      most $n-1$ nested $\X$.  Therefore $\varphi$ cannot be part of a
      cycle.
    \end{enumerate}

    \noindent
    The remaining points are shown by induction on the structure of
    the formula, following the fragments defined in Section~\ref{sec:classes}.
    \begin{enumerate}
\item[2.]
%
      Follows immediately from the previous point.  The only formulas
      of $\LTL_B(\CP)$ for which $\lambda$ does not return $*$ are
      $\ffalse$ and $\ttrue$, and the automaton constructed for
      $\iota\in \LTL_B(\CP)$ will necessarily reach one (or both) of
      those states.
    \item[3.]
%
        According to the definition of $\tr(\cdot)$, the only way to have a non-trivial SCC is by using the operators $\U$ and $\M$. It follows that in every state $\psi$ that appears in a non-trivial SCC we have $\lambda(\psi)=\bot$. Otherwise, the only accepting non-trivial SCC is $\ttrue$, for which $\lambda(\ttrue)=\top$.

        In the case of $\iota=\neg \varphi_S$ for $\varphi_S\in \LTL_S(\CP)$, the Lemma follows by duality and induction on the structure of the formula.
    \item[4.]
%
      This is the dual of the previous point.
    \item[5.]
        For all formulas built by using Boolean connectives or $\X$, this is proven by induction on the structure of the formula and the closure of weak automata under all Boolean operations.

        The remaining cases are formulas of the form $\varphi_O \U \varphi_G$, $\varphi_O \R \varphi_S$, $\varphi_S \W \varphi_O$, and $\varphi_G \M \varphi_O$.
        We concentrate on formulas of the form $\varphi_O\U\varphi_G$. The others are similar.

        Consider a formula $\psi^o \U\psi^g$. We assume by induction that the automata obtained for $\psi^o$ and $\psi^g$ satisfy the Lemma.
        Consider the automaton created for $\psi^o \U \psi^g$.
        By structure of $\tr(\cdot)$, a run of $D_\iota$ has the following form:
        $$
        \begin{array}{l l}
            q_0: & \psi^o \U \psi^g\\
            q_1:& \left (\psi^o \U\psi^g \wedge \psi^o_{0,1}\right ) \vee \psi^g_{0,1} \\
            q_2:&\left (\psi^o \U\psi^g \wedge \psi^o_{1,2} \wedge \psi^o_{0,2}\right ) \vee \left (\psi^g_{1,2} \wedge \psi^o_{0,2}\right ) \vee \psi^g_{0,2}\\
            q_3:&\left (\psi^o \U\psi^g \wedge \psi^o_{2,3} \wedge \psi^o_{1,3} \wedge \psi^o_{0,3}\right ) \vee \left (\psi^g_{2,3} \wedge \psi^o_{1,3} \wedge \psi^o_{0,3}\right ) \vee \left (\psi^g_{1,3} \wedge \psi^o_{0,3}\right ) \vee \psi^g_{0,3}\\
            \ldots\\
            q_i:& \left (\psi^o \U\psi^g \wedge \displaystyle \bigwedge_{0\leq j< i} \psi^o_{j,i}\right ) \vee \bigvee_{0\leq j<i} \left (\psi^g_{j,i} \wedge \displaystyle \bigwedge_{1\leq k<j} \psi^o_{k,i}\right ),
            \end{array}
        $$
        where $\psi^\alpha_{j,k}$ represents the state of the automaton for $\psi^\alpha$ after having read the input $w_{j,k}$. Notice that formulas are simplified by the propositional equivalence. In particular, $\psi^\alpha_{j,k}$ could be equivalent to $\ttrue$ or $\ffalse$, leading to a much simpler structure.

        Thus, the general form of a state of $D_\iota$ is (propositionally equivalent to):
        $$
          \left (\psi^o \U\psi^g \wedge \displaystyle \bigwedge_{0\leq j< i} \psi^o_{j,i}\right ) \vee \bigvee_{0\leq j<i} \left (\psi^g_{j,i} \wedge \displaystyle \bigwedge_{1\leq k<j} \psi^o_{k,i}\right ).
        $$

        We notice that for every state $\varphi$ that has $\psi^o \U\psi^g$ as a component in the first disjunct, we have $\lambda(\varphi) \in \{\bot,\top\}$. Consider a state $\varphi$ where the component containing $\psi^o\U\psi^g$ has been simplified to $\ffalse$ or $\ttrue$.
        If state $\varphi$ appears in a non-trivial SCC, then $\lambda(\varphi)\in \{\bot,\top\}$ as the same holds for the states of $D_{\psi^o}$ and $D_{\psi^g}$.

        We have to show that every non-trivial SCC is assigned a uniform acceptance status of either $\bot$ or $\top$.
        Consider a non-trivial SCC where $\lambda$ associates with some states $\bot$ and some states $\top$.
        Consider a loop that starts with $\lambda$ being $\bot$, then $\top$, then $\bot$ again.
        The only states with which $\lambda$ associates $\top$ are those for which $\lambda(\psi^g_{j,k})=\top$. However, by induction, this
        is only the case if $\psi^g_{j,k}\equiv \ttrue$. This is irreversible and hence the disjunct to which this $\psi^g_{j,k}$ belongs is a weak automaton that is obtained as the product of a fixed number of weak automata.
        By the smaller weak automata having every non-trivial SCC with uniform acceptance, it is impossible to have such an SCC with mixed acceptance.
        Hence, we conclude that the only way to go back from $\lambda$ associating $\top$ to $\bot$ is if some conjunct containing $\psi^g_{j,k}\equiv \ttrue$ simplifies overall to $\ffalse$.
        That is, $\bigwedge_{1\leq k <j}\psi^o_{k,i} \equiv \ffalse$.
        We notice that the first disjunct, which contains $\psi^o\U \psi^g$, includes $\bigwedge_{1\leq k <j}\psi^o_{k,i}$ as a conjunct.
        Hence, the first disjunct, overall, becomes equivalent to $\ffalse$ as well.
        It follows that every non-trivial SCC with both accepting and rejecting states is part of the product of a fixed number of weak automata. By weak automata being closed under Boolean operations this is impossible.
    \end{enumerate}
\qed
\end{proof}

\section{Synthesis Example}\label{app:synthesis-example}

\begin{figure}[tb]
  \centering
  \begin{tikzpicture}[initial text={},yscale=0.75,baseline=(current bounding box.center)]
    \node[root] at (0,0) (r2) {$\varphi_4$};
    \node[root] at (.7,0) (r3) {$\varphi_3$};
    \node[root, accepting] at (1.4,0) (r0) {$\ttrue$};
    \node[root, accepting, initial above] at (2.1,0) (r4) {$\varphi_1$};
    \node[root, accepting] at (2.8,0) (r1) {$\varphi_2$};
    \node[root] at (3.5,0) (r5) {$\varphi_5$};

    \node[inode] at (0,-1) (i11) {$i_1$};
    \node[inode] at (.7,-1) (i12) {$i_1$};
    \node[inode] at (2.1,-1) (i13) {$i_1$};
    \node[inode] at (2.8,-1) (i14) {$i_1$};

    \node[inode] at (0.4,-2) (i21) {$i_2$};
    \node[inode] at (1,-2) (i22) {$i_2$};
    \node[inode] at (2.5,-2) (i23) {$i_2$};
    \node[inode] at (3.1,-2) (i24) {$i_2$};

    \node[inode] at (1.8,-3) (o1) {$o$};
    \node[inode] at (2.5,-3) (o2) {$o$};
    \node[inode] at (3.5,-3) (o3) {$o$};

    \node[termn] at (0,-4) (t3) {$\varphi_3$};
    \node[termn] at (.7,-4) (t2) {$\varphi_4$};
    \node[terma] at (1.4,-4) (t0) {$\ttrue$};
    \node[terma] at (2.1,-4) (t1) {$\varphi_2$};
    \node[terma] at (2.8,-4) (t4) {$\varphi_1$};
    \node[termn] at (3.5,-4) (t5) {$\varphi_5$};

    \draw[sccr] ($(r2.north west)+(-.5mm,.5mm)$) -| ($(r3.south east)+(.5mm,0)$)
       -- ($(i22.east)+(.5mm,0)$) -- ($(t2.north east)+(.5mm,0)$) |- ($(t3.south west)+(-.5mm,-.5mm)$) -- cycle;
    \draw[scca] ($(r4.north west)+(-.5mm,.5mm)$) -| ($(r1.south east)+(.5mm,0)$)
       -- ($(i24.east)+(.5mm,0)$) -- (t4.east |- o2) -- ($(t4.north east)+(.5mm,0)$) |- ($(t1.south west)+(-.5mm,-.5mm)$) -- ($(t1.north west)+(-.5mm,0)$) -- ($(o1.west)+(-.5mm,0)$) -- ($(i13.west)+(-.5mm,0)$) -- ($(r4.south west)+(-.5mm,0)$) --cycle;
    \draw[scca] ($(r0.north west)+(-.5mm,.5mm)$) -| ($(r0.south east)+(.5mm,0)$) -- ($(o1.west)+(-1mm,0)$) -- ($(t0.north east)+(.5mm,0)$) |- ($(t0.south west)+(-.5mm,-.5mm)$) -- ($(t0.north west)+(-.5mm,0)$) -- ($(i22.east)+(1mm,0)$) -- ($(r0.south west)+(-.5mm,0)$) -- cycle;
    \draw[sccr] ($(r5.north west)+(-.5mm,.5mm)$) -| ($(t5.south east)+(.5mm,-.5mm)$) -| ($(t5.north west)+(-.5mm,0)$) -- ($(o3.west)+(-.5mm,0)$) -- ($(i24.east)+(1mm,0)$) -- ($(r5.south west)+(-.5mm,0)$)  -- cycle;

    \draw[rlink] (r2) -- (i11);
    \draw[rlink] (r3) -- (i12);
    \draw[rlink] (r4) -- (i13);
    \draw[rlink] (r1) -- (i14);
    \draw[rlink] (r5) -- (o3);
    \draw[rlink] (r0) to[out=-90,in=90] (t0.75);

    \draw[low] (i11) to[out=-110,in=90] (t3.120);
    \draw[high] (i11) to[out=-95,in=90] (t2.120);
    \draw[low] (i12) -- (i21);
    \draw[high] (i12) -- (i22);
    \draw[low] (i21) to[out=-70,in=90] (t0.130);
    \draw[high] (i21) to[out=-110,in=90] (t3.90);
    \draw[low] (i22)  to[out=-70,in=90]  (t0.105);
    \draw[high] (i22)  to[out=-110,in=90]  (t2.90);

    \draw[low] (i13) to[bend right=10] (o1);
    \draw[high] (i13) to[bend right=15] (o2);
    \draw[low] (o1) to[out=-170,in=90] (t3.60);
    \draw[high] (o1) to[out=-55,in=90] (t1);

    \draw[low] (i14) -- (i23);
    \draw[high] (i14) -- (i24);

    \draw[low] (i23) -- (o3);
    \draw[high] (i23) -- (o1);
    \draw[low] (i24) -- (o3);
    \draw[high] (i24) -- (o2);

    \draw[low] (o2) to[out=-135,in=90,looseness=1.6] (t2.60);
    \draw[high] (o2) to[out=-55,in=90] (t4);

    \draw[low] (o3) to[out=-135,in=90,looseness=1.1] (t0.50);
    \draw[high] (o3) -- (t5);

  \end{tikzpicture}
  \quad
  \begin{tikzpicture}[initial text={},yscale=0.75,baseline=(current bounding box.center)]

    \node[inode,env] at (0,-1) (i11) {$i_1$};
    \node[inode,env] at (.7,-1) (i12) {$i_1$};
    \node[inode,env,initial above] at (2.1,-1) (i13) {$i_1$};
    \node[inode,env] at (2.8,-1) (i14) {$i_1$};

    \node[inode,env] at (0.4,-2) (i21) {$i_2$};
    \node[inode,env] at (1,-2) (i22) {$i_2$};
    \node[inode,env] at (2.5,-2) (i23) {$i_2$};
    \node[inode,env] at (3.1,-2) (i24) {$i_2$};

    \node[inode,control] at (1.8,-3) (o1) {$o$};
    \node[inode,control] at (2.5,-3) (o2) {$o$};
    \node[inode,control] at (3.5,-3) (o3) {$o$};

    \node[termn] at (0,-4) (t3) {$\varphi_3$};
    \node[termn] at (.7,-4) (t2) {$\varphi_4$};
    \node[terma] at (1.4,-4) (t0) {$\top$};
    \node[terma] at (2.1,-4) (t1) {$\varphi_2$};
    \node[terma] at (2.8,-4) (t4) {$\varphi_1$};
    \node[termn] at (3.5,-4) (t5) {$\varphi_5$};

      \draw[sccr] ($(i11.north west)+(-.5mm,.5mm)$) --
                   ($(i12.north east)+(.5mm,.5mm)$) --
                   ($(i22.north east)+(.5mm,.5mm)$) --
                   ($(i22.south east)+(.5mm,-.5mm)$) --
                   ($(t2.north east)+(.5mm,.5mm)$) |-
                   ($(t3.south west)+(-.5mm,-.5mm)$) -- cycle;

      \draw[scca] ($(i13.north west)+(-.5mm,.5mm)$) --
                   ($(i14.north east)+(.5mm,.5mm)$) --
                   ($(i24.north east)+(.5mm,.5mm)$) --
                   ($(i24.south east)+(.5mm,-.5mm)$) --
                   ($(o2.east)+(.5mm,-.5mm)$) --
                   ($(t4.north east)+(.5mm,.5mm)$) |-
                   ($(t1.south west)+(-.5mm,-.5mm)$) --
                   ($(t1.north west)+(-.5mm,.5mm)$) --
                   ($(o1.west)+(-.5mm,0)$) --
                   ($(i13.south west)+(-.5mm,-.5mm)$) --
                   cycle;
    \draw[scca] ($(t0.north west)+(-.75mm,.75mm)$) -|
    ($(t0.south east)+(.75mm,-.75mm)$) -- ($(t0.south west)+(-.75mm,-.75mm)$) --
    cycle;
    \draw[sccr] ($(t5.north east)+(.5mm,.5mm)$) |-
    ($(t5.south west)+(-.5mm,-.5mm)$) |- ($(o3.north)+(0,-.5mm)$) -| cycle;

    \draw[glink] (i11) to[out=-110,in=90] (t3.120);
    \draw[glink] (i11) to[out=-95,in=90] (t2.120);
    \draw[glink] (i12) -- (i21);
    \draw[glink] (i12) -- (i22);
    \draw[glink] (i21) to[out=-70,in=90] (t0.130);
    \draw[glink] (i21) to[out=-110,in=90] (t3.90);
    \draw[glink] (i22)  to[out=-70,in=90]  (t0.105);
    \draw[glink] (i22)  to[out=-110,in=90]  (t2.90);

    \draw[glink] (i13) to[bend right=10] (o1);
    \draw[glink] (i13) to[bend right=15] (o2);
    \draw[glink] (o1) to[out=-170,in=90] (t3.60);
    \draw[glink] (o1) to[out=-55,in=90] (t1);

    \draw[glink] (i14) -- (i23);
    \draw[glink] (i14) -- (i24);

    \draw[glink] (i23) -- (o3);
    \draw[glink] (i23) -- (o1);
    \draw[glink] (i24) -- (o3);
    \draw[glink] (i24) -- (o2);

    \draw[glink] (o2) to[out=-135,in=90,looseness=1.6] (t2.60);
    \draw[glink] (o2) to[out=-55,in=90] (t4);

    \draw[glink] (o3) to[out=-135,in=90,looseness=1.1] (t0.50);
    \draw[glink] (o3) -- (t5);

    \draw[glink] (t0) edge[loop below] (t0);
    \draw[glink,rounded corners=1mm] (t5.east) -| ++(1mm,5mm) |- ++(-2mm,9mm) -- (o3);
    \draw[glink,rounded corners=1mm] (t4) |- ($(t5.south east)+(2mm,-2mm)$) |-
      ($(i13.north east)+(0mm,+4mm)$) -- (i13);
    \draw[glink,rounded corners=1mm] (t1) |- ($(t5.south east)+(3mm,-3mm)$) |-
      ($(i14.north east)+(0mm,+3mm)$) -- (i14);
    \draw[glink,rounded corners=1mm] (t3) |- ($(t3.south west)+(-2mm,-2mm)$) |-
      ($(i11.north west)+(0mm,+4mm)$) -| (i12);
    \draw[glink,rounded corners=1mm] (t2) |- ($(t3.south west)+(-3mm,-3mm)$) |-
      ($(i11.north west)+(0mm,+3mm)$) -| (i11);
    \end{tikzpicture}
    \quad
    \begin{tikzpicture}[baseline=(current bounding box.center)]
      \node[scca] (SCC1) {$\{\varphi_1,\varphi_2\}$};
      \node[sccr,below=3mm of SCC1.south west](SCC2) {$\{\varphi_3,\varphi_4\}$};
      \node[sccr,below=3mm of SCC1.south east](SCC3) {$\{\varphi_5\}$};
      \node[scca,below=11mm of SCC1.south](SCC4) {$\{\ttrue\}$};
      \draw[->] (SCC1) -- (SCC2);
      \draw[->] (SCC1) -- (SCC3);
      \draw[->] (SCC2) -- (SCC4);
      \draw[->] (SCC3) -- (SCC4);
    \end{tikzpicture}
    \caption[MTDWA and its game interpretation]{The MTDWA from
      Figure~\ref{fig:example}, and its two-player game
      interpretation.  Both also show their decomposition in
      \tikz[baseline=(X.base)]\node[scca](X){accepting}; and
      \tikz[baseline=(X.base)]\node[sccr](X){rejecting}; SCCs.  The
      rightmost picture is the ``condensation graph'' where each SCC
      has been reduced to a single node (labeled by the terminals it
      contains for easy identification).\label{fig:game-interpretation}}
\end{figure}

Figure~\ref{fig:game-interpretation} shows how an MTDWA can be
interpreted as a two-player game with weak objective.  Here we are
doing synthesis with Mealy semantics, so the input variables are
ordered above the output variables.

In the game interpretation, each internal node is assigned either to
the input player (rectangular nodes), or to the output player
(diamond nodes) based on the nature of the variable labeling that
node.  The terminal nodes can be assigned to either player since they
do not allow any choice: from a terminal $\boxed{\alpha}$, the only
possible move is to jump to the node corresponding to the root of
$\Delta(\alpha)$.

The game is won by the output player iff it has a strategy to force
the game to reach the accepting terminal ($\ttrue$, $\varphi_1$, or
$\varphi_2$) infinitely often.  In terms of SCCs, the game is won by
the output player iff it has a strategy to force the game to get stuck
into an accepting SCC ($\{\ttrue\}$, or $\{\varphi_1,\varphi_2\}$).

Before we solve the game, let us first address the fact that the
condensation graph shown here does not contain an edge between
$\varphi_2$ and $\ttrue$, despite the fact that the explicit automaton
from Figure~\ref{fig:example} has one.  If we are in state $\varphi_2$
and the input player plays $i_1i_2$ or $\bar{i_1}i_2$, then from the
point of view of the output player, the game can be played exactly as
if we were in state $\varphi_5$.  The MTDWA representation captures
this, and the paths between $\varphi_2$ and $\top$ will temporarily
enter the SCC of $\varphi_5$. This is inconsequential in practice
since it will not affect the topological ordering of the SCCs.

\begin{figure}[tbp]
  \centering
  \begin{subfigure}{0.32\textwidth}\centering
  \begin{tikzpicture}[initial text={},yscale=0.75,baseline=(current bounding box.center)]
    \node[root, accepting, initial above] at (2.1,0) (r4) {$\varphi_1$};

    \node[inode] at (2.1,-1) (i13) {$i_1$};


    \node[inode] at (1.8,-3) (o1) {$o$};
    \node[inode] at (2.5,-3) (o2) {$o$};

    \node[termn] at (0,-4) (t3) {$\varphi_3$};
    \node[termn] at (.7,-4) (t2) {$\varphi_4$};
    \node[terma] at (2.1,-4) (t1) {$\varphi_2$};
    \node[terma] at (2.8,-4) (t4) {$\varphi_1$};


    \draw[rlink] (r4) -- (i13);


    \draw[low] (i13) to[bend right=10] (o1);
    \draw[high] (i13) to[bend right=15] (o2);
    \draw[low] (o1) to[out=-170,in=90] (t3.60);
    \draw[high] (o1) to[out=-55,in=90] (t1);



    \draw[low] (o2) to[out=-135,in=90,looseness=1.6] (t2.60);
    \draw[high] (o2) to[out=-55,in=90] (t4);


  \end{tikzpicture}
  \caption{$\Delta(\varphi_1)$ computed}
\end{subfigure}
  \begin{subfigure}{0.32\textwidth}\centering
  \begin{tikzpicture}[initial text={},yscale=0.75,baseline=(current bounding box.center)]
    \node[root] at (0,0) (r2) {$\varphi_4$};
    \node[root, accepting, initial above] at (2.1,0) (r4) {$\varphi_1$};

    \node[inode] at (0,-1) (i11) {$i_1$};
    \node[inode] at (2.1,-1) (i13) {$i_1$};


    \node[inode] at (1.8,-3) (o1) {$o$};
    \node[inode] at (2.5,-3) (o2) {$o$};

    \node[termn] at (0,-4) (t3) {$\varphi_3$};
    \node[termn] at (.7,-4) (t2) {$\varphi_4$};
    \node[terma] at (2.1,-4) (t1) {$\varphi_2$};
    \node[terma] at (2.8,-4) (t4) {$\varphi_1$};


     \draw[rlink] (r2) -- (i11);
    \draw[rlink] (r4) -- (i13);

    \draw[low] (i11) to[out=-110,in=90] (t3.120);
    \draw[high] (i11) to[out=-95,in=90] (t2.120);

    \draw[low] (i13) to[bend right=10] (o1);
    \draw[high] (i13) to[bend right=15] (o2);
    \draw[low] (o1) to[out=-170,in=90] (t3.60);
    \draw[high] (o1) to[out=-55,in=90] (t1);



    \draw[low] (o2) to[out=-135,in=90,looseness=1.6] (t2.60);
    \draw[high] (o2) to[out=-55,in=90] (t4);


  \end{tikzpicture}
  \caption{$\Delta(\varphi_4)$ computed}
\end{subfigure}
  \begin{subfigure}{0.32\textwidth}\centering
  \begin{tikzpicture}[initial text={},yscale=0.75,baseline=(current bounding box.center)]
    \node[root] at (0,0) (r2) {$\varphi_4$};
    \node[root] at (.7,0) (r3) {$\varphi_3$};
    \node[root, accepting, initial above] at (2.1,0) (r4) {$\varphi_1$};

    \node[inode] at (0,-1) (i11) {$i_1$};
    \node[inode] at (.7,-1) (i12) {$i_1$};
    \node[inode] at (2.1,-1) (i13) {$i_1$};

    \node[inode] at (0.4,-2) (i21) {$i_2$};
    \node[inode] at (1,-2) (i22) {$i_2$};

    \node[inode] at (1.8,-3) (o1) {$o$};
    \node[inode] at (2.5,-3) (o2) {$o$};

    \node[termn] at (0,-4) (t3) {$\varphi_3$};
    \node[termn] at (.7,-4) (t2) {$\varphi_4$};
    \node[terma] at (1.4,-4) (t0) {$\ttrue$};
    \node[terma] at (2.1,-4) (t1) {$\varphi_2$};
    \node[terma] at (2.8,-4) (t4) {$\varphi_1$};


     \draw[rlink] (r2) -- (i11);
    \draw[rlink] (r3) -- (i12);
    \draw[rlink] (r4) -- (i13);

    \draw[low] (i11) to[out=-110,in=90] (t3.120);
    \draw[high] (i11) to[out=-95,in=90] (t2.120);
    \draw[low] (i12) -- (i21);
    \draw[high] (i12) -- (i22);
    \draw[low] (i21) to[out=-70,in=90] (t0.130);
    \draw[high] (i21) to[out=-110,in=90] (t3.90);
    \draw[low] (i22)  to[out=-70,in=90]  (t0.105);
    \draw[high] (i22)  to[out=-110,in=90]  (t2.90);

    \draw[low] (i13) to[bend right=10] (o1);
    \draw[high] (i13) to[bend right=15] (o2);
    \draw[low] (o1) to[out=-170,in=90] (t3.60);
    \draw[high] (o1) to[out=-55,in=90] (t1);



    \draw[low] (o2) to[out=-135,in=90,looseness=1.6] (t2.60);
    \draw[high] (o2) to[out=-55,in=90] (t4);


  \end{tikzpicture}
  \caption{$\Delta(\varphi_3)$ computed}
\end{subfigure}
  \begin{subfigure}{0.32\textwidth}\centering
  \begin{tikzpicture}[initial text={},yscale=0.75,baseline=(current bounding box.center)]
    \node[root] at (0,0) (r2) {$\varphi_4$};
    \node[root] at (.7,0) (r3) {$\varphi_3$};
    \node[root, accepting] at (1.4,0) (r0) {$\ttrue$};
    \node[root, accepting, initial above] at (2.1,0) (r4) {$\varphi_1$};

    \node[inode] at (0,-1) (i11) {$i_1$};
    \node[inode] at (.7,-1) (i12) {$i_1$};
    \node[inode] at (2.1,-1) (i13) {$i_1$};

    \node[inode] at (0.4,-2) (i21) {$i_2$};
    \node[inode] at (1,-2) (i22) {$i_2$};

    \node[inode] at (1.8,-3) (o1) {$o$};
    \node[inode] at (2.5,-3) (o2) {$o$};

    \node[termn] at (0,-4) (t3) {$\varphi_3$};
    \node[termn] at (.7,-4) (t2) {$\varphi_4$};
    \node[terma] at (1.4,-4) (t0) {$\ttrue$};
    \node[terma] at (2.1,-4) (t1) {$\varphi_2$};
    \node[terma] at (2.8,-4) (t4) {$\varphi_1$};


     \draw[rlink] (r2) -- (i11);
    \draw[rlink] (r3) -- (i12);
    \draw[rlink] (r4) -- (i13);
    \draw[rlink] (r0) to[out=-90,in=90] (t0.75);

    \draw[low] (i11) to[out=-110,in=90] (t3.120);
    \draw[high] (i11) to[out=-95,in=90] (t2.120);
    \draw[low] (i12) -- (i21);
    \draw[high] (i12) -- (i22);
    \draw[low] (i21) to[out=-70,in=90] (t0.130);
    \draw[high] (i21) to[out=-110,in=90] (t3.90);
    \draw[low] (i22)  to[out=-70,in=90]  (t0.105);
    \draw[high] (i22)  to[out=-110,in=90]  (t2.90);

    \draw[low] (i13) to[bend right=10] (o1);
    \draw[high] (i13) to[bend right=15] (o2);
    \draw[low] (o1) to[out=-170,in=90] (t3.60);
    \draw[high] (o1) to[out=-55,in=90] (t1);



    \draw[low] (o2) to[out=-135,in=90,looseness=1.6] (t2.60);
    \draw[high] (o2) to[out=-55,in=90] (t4);


  \end{tikzpicture}
  \caption{$\Delta(\ttrue)$ computed}
\end{subfigure}
  \begin{subfigure}{0.32\textwidth}\centering
  \begin{tikzpicture}[initial text={},yscale=0.75,baseline=(current bounding box.center)]
    \node[root] at (0,0) (r2) {$\varphi_4$};
    \node[root] at (.7,0) (r3) {$\varphi_3$};
    \node[root, accepting,win] at (1.4,0) (r0) {$\ttrue$};
    \node[root, accepting, initial above] at (2.1,0) (r4) {$\varphi_1$};

    \node[inode] at (0,-1) (i11) {$i_1$};
    \node[inode] at (.7,-1) (i12) {$i_1$};
    \node[inode] at (2.1,-1) (i13) {$i_1$};

    \node[inode] at (0.4,-2) (i21) {$i_2$};
    \node[inode] at (1,-2) (i22) {$i_2$};

    \node[inode] at (1.8,-3) (o1) {$o$};
    \node[inode] at (2.5,-3) (o2) {$o$};

    \node[termn] at (0,-4) (t3) {$\varphi_3$};
    \node[termn] at (.7,-4) (t2) {$\varphi_4$};
    \node[terma,win] at (1.4,-4) (t0) {$\ttrue$};
    \node[terma] at (2.1,-4) (t1) {$\varphi_2$};
    \node[terma] at (2.8,-4) (t4) {$\varphi_1$};

    \draw[scca] ($(r0.north west)+(-.5mm,.5mm)$) -| ($(r0.south east)+(.5mm,0)$) -- ($(o1.west)+(-1mm,0)$) -- ($(t0.north east)+(.5mm,0)$) |- ($(t0.south west)+(-.5mm,-.5mm)$) -- ($(t0.north west)+(-.5mm,0)$) -- ($(i22.east)+(1mm,0)$) -- ($(r0.south west)+(-.5mm,0)$) -- cycle;

     \draw[rlink] (r2) -- (i11);
    \draw[rlink] (r3) -- (i12);
    \draw[rlink] (r4) -- (i13);
    \draw[rlink] (r0) to[out=-90,in=90] (t0.75);

    \draw[low] (i11) to[out=-110,in=90] (t3.120);
    \draw[high] (i11) to[out=-95,in=90] (t2.120);
    \draw[low] (i12) -- (i21);
    \draw[high] (i12) -- (i22);
    \draw[low] (i21) to[out=-70,in=90] (t0.130);
    \draw[high] (i21) to[out=-110,in=90] (t3.90);
    \draw[low] (i22)  to[out=-70,in=90]  (t0.105);
    \draw[high] (i22)  to[out=-110,in=90]  (t2.90);

    \draw[low] (i13) to[bend right=10] (o1);
    \draw[high] (i13) to[bend right=15] (o2);
    \draw[low] (o1) to[out=-170,in=90] (t3.60);
    \draw[high] (o1) to[out=-55,in=90] (t1);



    \draw[low] (o2) to[out=-135,in=90,looseness=1.6] (t2.60);
    \draw[high] (o2) to[out=-55,in=90] (t4);


  \end{tikzpicture}
  \caption{SCC $\{\ttrue\}$ backtracked}
\end{subfigure}
  \begin{subfigure}{0.32\textwidth}\centering
  \begin{tikzpicture}[initial text={},yscale=0.75,baseline=(current bounding box.center)]
    \node[root,lose] at (0,0) (r2) {$\varphi_4$};
    \node[root,lose] at (.7,0) (r3) {$\varphi_3$};
    \node[root, accepting,win] at (1.4,0) (r0) {$\ttrue$};
    \node[root, accepting, initial above] at (2.1,0) (r4) {$\varphi_1$};

    \node[inode,lose] at (0,-1) (i11) {$i_1$};
    \node[inode,lose] at (.7,-1) (i12) {$i_1$};
    \node[inode] at (2.1,-1) (i13) {$i_1$};

    \node[inode,lose] at (0.4,-2) (i21) {$i_2$};
    \node[inode,lose] at (1,-2) (i22) {$i_2$};

    \node[inode] at (1.8,-3) (o1) {$o$};
    \node[inode] at (2.5,-3) (o2) {$o$};

    \node[termn,lose] at (0,-4) (t3) {$\varphi_3$};
    \node[termn,lose] at (.7,-4) (t2) {$\varphi_4$};
    \node[terma,win] at (1.4,-4) (t0) {$\ttrue$};
    \node[terma] at (2.1,-4) (t1) {$\varphi_2$};
    \node[terma] at (2.8,-4) (t4) {$\varphi_1$};

    \draw[sccr] ($(r2.north west)+(-.5mm,.5mm)$) -| ($(r3.south east)+(.5mm,0)$) -- ($(i22.east)+(.5mm,0)$) -- ($(t2.north east)+(.5mm,0)$) |- ($(t3.south west)+(-.5mm,-.5mm)$) -- cycle;
    \draw[scca] ($(r0.north west)+(-.5mm,.5mm)$) -| ($(r0.south east)+(.5mm,0)$) -- ($(o1.west)+(-1mm,0)$) -- ($(t0.north east)+(.5mm,0)$) |- ($(t0.south west)+(-.5mm,-.5mm)$) -- ($(t0.north west)+(-.5mm,0)$) -- ($(i22.east)+(1mm,0)$) -- ($(r0.south west)+(-.5mm,0)$) -- cycle;

     \draw[rlink] (r2) -- (i11);
    \draw[rlink] (r3) -- (i12);
    \draw[rlink] (r4) -- (i13);
    \draw[rlink] (r0) to[out=-90,in=90] (t0.75);

    \draw[low] (i11) to[out=-110,in=90] (t3.120);
    \draw[high] (i11) to[out=-95,in=90] (t2.120);
    \draw[low] (i12) -- (i21);
    \draw[high] (i12) -- (i22);
    \draw[low] (i21) to[out=-70,in=90] (t0.130);
    \draw[high] (i21) to[out=-110,in=90] (t3.90);
    \draw[low] (i22)  to[out=-70,in=90]  (t0.105);
    \draw[high] (i22)  to[out=-110,in=90]  (t2.90);

    \draw[low] (i13) to[bend right=10] (o1);
    \draw[high] (i13) to[bend right=15] (o2);
    \draw[low] (o1) to[out=-170,in=90] (t3.60);
    \draw[high] (o1) to[out=-55,in=90] (t1);



    \draw[low] (o2) to[out=-135,in=90,looseness=1.6] (t2.60);
    \draw[high] (o2) to[out=-55,in=90] (t4);


  \end{tikzpicture}
  \caption{SCC $\{\varphi_3,\varphi_4\}$ backtracked}
\end{subfigure}

\begin{subfigure}{0.45\textwidth}\centering
  \begin{tikzpicture}[initial text={},yscale=0.75,baseline=(current bounding box.center)]
    \node[root,lose] at (0,0) (r2) {$\varphi_4$};
    \node[root,lose] at (.7,0) (r3) {$\varphi_3$};
    \node[root, accepting,win] at (1.4,0) (r0) {$\ttrue$};
    \node[root, accepting, initial above] at (2.1,0) (r4) {$\varphi_1$};
    \node[root, accepting] at (2.8,0) (r1) {$\varphi_2$};

    \node[inode,lose] at (0,-1) (i11) {$i_1$};
    \node[inode,lose] at (.7,-1) (i12) {$i_1$};
    \node[inode] at (2.1,-1) (i13) {$i_1$};
    \node[inode] at (2.8,-1) (i14) {$i_1$};

    \node[inode,lose] at (0.4,-2) (i21) {$i_2$};
    \node[inode,lose] at (1,-2) (i22) {$i_2$};
    \node[inode] at (2.5,-2) (i23) {$i_2$};
    \node[inode] at (3.1,-2) (i24) {$i_2$};

    \node[inode] at (1.8,-3) (o1) {$o$};
    \node[inode] at (2.5,-3) (o2) {$o$};
    \node[inode,win] at (3.5,-3) (o3) {$o$};

    \node[termn,lose] at (0,-4) (t3) {$\varphi_3$};
    \node[termn,lose] at (.7,-4) (t2) {$\varphi_4$};
    \node[terma,win] at (1.4,-4) (t0) {$\ttrue$};
    \node[terma] at (2.1,-4) (t1) {$\varphi_2$};
    \node[terma] at (2.8,-4) (t4) {$\varphi_1$};
    \node[termn] at (3.5,-4) (t5) {$\varphi_5$};

    \draw[sccr] ($(r2.north west)+(-.5mm,.5mm)$) -| ($(r3.south east)+(.5mm,0)$) -- ($(i22.east)+(.5mm,0)$) -- ($(t2.north east)+(.5mm,0)$) |- ($(t3.south west)+(-.5mm,-.5mm)$) -- cycle;
    \draw[scca] ($(r0.north west)+(-.5mm,.5mm)$) -| ($(r0.south east)+(.5mm,0)$) -- ($(o1.west)+(-1mm,0)$) -- ($(t0.north east)+(.5mm,0)$) |- ($(t0.south west)+(-.5mm,-.5mm)$) -- ($(t0.north west)+(-.5mm,0)$) -- ($(i22.east)+(1mm,0)$) -- ($(r0.south west)+(-.5mm,0)$) -- cycle;

     \draw[rlink] (r2) -- (i11);
    \draw[rlink] (r3) -- (i12);
    \draw[rlink] (r4) -- (i13);
    \draw[rlink] (r1) -- (i14);
    \draw[rlink] (r0) to[out=-90,in=90] (t0.75);

    \draw[low] (i11) to[out=-110,in=90] (t3.120);
    \draw[high] (i11) to[out=-95,in=90] (t2.120);
    \draw[low] (i12) -- (i21);
    \draw[high] (i12) -- (i22);
    \draw[low] (i21) to[out=-70,in=90] (t0.130);
    \draw[high] (i21) to[out=-110,in=90] (t3.90);
    \draw[low] (i22)  to[out=-70,in=90]  (t0.105);
    \draw[high] (i22)  to[out=-110,in=90]  (t2.90);

    \draw[low] (i13) to[bend right=10] (o1);
    \draw[high] (i13) to[bend right=15] (o2);
    \draw[low] (o1) to[out=-170,in=90] (t3.60);
    \draw[high] (o1) to[out=-55,in=90] (t1);

    \draw[low] (i14) -- (i23);
    \draw[high] (i14) -- (i24);

    \draw[low] (i23) -- (o3);
    \draw[high] (i23) -- (o1);
    \draw[low] (i24) -- (o3);
    \draw[high] (i24) -- (o2);

    \draw[low] (o2) to[out=-135,in=90,looseness=1.6] (t2.60);
    \draw[high] (o2) to[out=-55,in=90] (t4);

    \draw[low,winedge] (o3) to[out=-135,in=90,looseness=1.1] (t0.50);
    \draw[high] (o3) -- (t5);

  \end{tikzpicture}
  \caption{$\Delta(\varphi_2)$ computed}
\end{subfigure}
\begin{subfigure}{0.45\textwidth}\centering
  \begin{tikzpicture}[initial text={},yscale=0.75,baseline=(current bounding box.center)]
    \node[root,lose] at (0,0) (r2) {$\varphi_4$};
    \node[root,lose] at (.7,0) (r3) {$\varphi_3$};
    \node[root, accepting,win] at (1.4,0) (r0) {$\ttrue$};
    \node[root, accepting, initial above] at (2.1,0) (r4) {$\varphi_1$};
    \node[root, accepting] at (2.8,0) (r1) {$\varphi_2$};
    \node[root,win] at (3.5,0) (r5) {$\varphi_5$};

    \node[inode,lose] at (0,-1) (i11) {$i_1$};
    \node[inode,lose] at (.7,-1) (i12) {$i_1$};
    \node[inode] at (2.1,-1) (i13) {$i_1$};
    \node[inode] at (2.8,-1) (i14) {$i_1$};

    \node[inode,lose] at (0.4,-2) (i21) {$i_2$};
    \node[inode,lose] at (1,-2) (i22) {$i_2$};
    \node[inode] at (2.5,-2) (i23) {$i_2$};
    \node[inode] at (3.1,-2) (i24) {$i_2$};

    \node[inode] at (1.8,-3) (o1) {$o$};
    \node[inode] at (2.5,-3) (o2) {$o$};
    \node[inode,win] at (3.5,-3) (o3) {$o$};

    \node[termn,lose] at (0,-4) (t3) {$\varphi_3$};
    \node[termn,lose] at (.7,-4) (t2) {$\varphi_4$};
    \node[terma,win] at (1.4,-4) (t0) {$\ttrue$};
    \node[terma] at (2.1,-4) (t1) {$\varphi_2$};
    \node[terma] at (2.8,-4) (t4) {$\varphi_1$};
    \node[termn,win] at (3.5,-4) (t5) {$\varphi_5$};

    \draw[sccr] ($(r2.north west)+(-.5mm,.5mm)$) -| ($(r3.south east)+(.5mm,0)$) -- ($(i22.east)+(.5mm,0)$) -- ($(t2.north east)+(.5mm,0)$) |- ($(t3.south west)+(-.5mm,-.5mm)$) -- cycle;
    \draw[scca] ($(r0.north west)+(-.5mm,.5mm)$) -| ($(r0.south east)+(.5mm,0)$) -- ($(o1.west)+(-1mm,0)$) -- ($(t0.north east)+(.5mm,0)$) |- ($(t0.south west)+(-.5mm,-.5mm)$) -- ($(t0.north west)+(-.5mm,0)$) -- ($(i22.east)+(1mm,0)$) -- ($(r0.south west)+(-.5mm,0)$) -- cycle;

     \draw[rlink] (r2) -- (i11);
    \draw[rlink] (r3) -- (i12);
    \draw[rlink] (r4) -- (i13);
    \draw[rlink] (r1) -- (i14);
    \draw[rlink] (r5) -- (o3);
    \draw[rlink] (r0) to[out=-90,in=90] (t0.75);

    \draw[low] (i11) to[out=-110,in=90] (t3.120);
    \draw[high] (i11) to[out=-95,in=90] (t2.120);
    \draw[low] (i12) -- (i21);
    \draw[high] (i12) -- (i22);
    \draw[low] (i21) to[out=-70,in=90] (t0.130);
    \draw[high] (i21) to[out=-110,in=90] (t3.90);
    \draw[low] (i22)  to[out=-70,in=90]  (t0.105);
    \draw[high] (i22)  to[out=-110,in=90]  (t2.90);

    \draw[low] (i13) to[bend right=10] (o1);
    \draw[high] (i13) to[bend right=15] (o2);
    \draw[low] (o1) to[out=-170,in=90] (t3.60);
    \draw[high] (o1) to[out=-55,in=90] (t1);

    \draw[low] (i14) -- (i23);
    \draw[high] (i14) -- (i24);

    \draw[low] (i23) -- (o3);
    \draw[high] (i23) -- (o1);
    \draw[low] (i24) -- (o3);
    \draw[high] (i24) -- (o2);

    \draw[low] (o2) to[out=-135,in=90,looseness=1.6] (t2.60);
    \draw[high] (o2) to[out=-55,in=90] (t4);

    \draw[low,winedge] (o3) to[out=-135,in=90,looseness=1.1] (t0.50);
    \draw[high] (o3) -- (t5);

  \end{tikzpicture}
  \caption{$\Delta(\varphi_5)$ computed}
\end{subfigure}

  \begin{subfigure}{0.45\textwidth}\centering
  \begin{tikzpicture}[initial text={},yscale=0.75,baseline=(current bounding box.center)]
    \node[root,lose] at (0,0) (r2) {$\varphi_4$};
    \node[root,lose] at (.7,0) (r3) {$\varphi_3$};
    \node[root, accepting,win] at (1.4,0) (r0) {$\ttrue$};
    \node[root, accepting, initial above] at (2.1,0) (r4) {$\varphi_1$};
    \node[root, accepting] at (2.8,0) (r1) {$\varphi_2$};
    \node[root,win] at (3.5,0) (r5) {$\varphi_5$};

    \node[inode,lose] at (0,-1) (i11) {$i_1$};
    \node[inode,lose] at (.7,-1) (i12) {$i_1$};
    \node[inode] at (2.1,-1) (i13) {$i_1$};
    \node[inode] at (2.8,-1) (i14) {$i_1$};

    \node[inode,lose] at (0.4,-2) (i21) {$i_2$};
    \node[inode,lose] at (1,-2) (i22) {$i_2$};
    \node[inode] at (2.5,-2) (i23) {$i_2$};
    \node[inode] at (3.1,-2) (i24) {$i_2$};

    \node[inode] at (1.8,-3) (o1) {$o$};
    \node[inode] at (2.5,-3) (o2) {$o$};
    \node[inode,win] at (3.5,-3) (o3) {$o$};

    \node[termn,lose] at (0,-4) (t3) {$\varphi_3$};
    \node[termn,lose] at (.7,-4) (t2) {$\varphi_4$};
    \node[terma,win] at (1.4,-4) (t0) {$\ttrue$};
    \node[terma] at (2.1,-4) (t1) {$\varphi_2$};
    \node[terma] at (2.8,-4) (t4) {$\varphi_1$};
    \node[termn,win] at (3.5,-4) (t5) {$\varphi_5$};

    \draw[sccr] ($(r2.north west)+(-.5mm,.5mm)$) -| ($(r3.south east)+(.5mm,0)$) -- ($(i22.east)+(.5mm,0)$) -- ($(t2.north east)+(.5mm,0)$) |- ($(t3.south west)+(-.5mm,-.5mm)$) -- cycle;
    \draw[scca] ($(r0.north west)+(-.5mm,.5mm)$) -| ($(r0.south east)+(.5mm,0)$) -- ($(o1.west)+(-1mm,0)$) -- ($(t0.north east)+(.5mm,0)$) |- ($(t0.south west)+(-.5mm,-.5mm)$) -- ($(t0.north west)+(-.5mm,0)$) -- ($(i22.east)+(1mm,0)$) -- ($(r0.south west)+(-.5mm,0)$) -- cycle;
    \draw[sccr] ($(r5.north west)+(-.5mm,.5mm)$) -| ($(t5.south east)+(.5mm,-.5mm)$) -| ($(t5.north west)+(-.5mm,0)$) -- ($(o3.west)+(-.5mm,0)$) -- ($(i24.east)+(1mm,0)$) -- ($(r5.south west)+(-.5mm,0)$)  -- cycle;

     \draw[rlink] (r2) -- (i11);
    \draw[rlink] (r3) -- (i12);
    \draw[rlink] (r4) -- (i13);
    \draw[rlink] (r1) -- (i14);
    \draw[rlink] (r5) -- (o3);
    \draw[rlink] (r0) to[out=-90,in=90] (t0.75);

    \draw[low] (i11) to[out=-110,in=90] (t3.120);
    \draw[high] (i11) to[out=-95,in=90] (t2.120);
    \draw[low] (i12) -- (i21);
    \draw[high] (i12) -- (i22);
    \draw[low] (i21) to[out=-70,in=90] (t0.130);
    \draw[high] (i21) to[out=-110,in=90] (t3.90);
    \draw[low] (i22)  to[out=-70,in=90]  (t0.105);
    \draw[high] (i22)  to[out=-110,in=90]  (t2.90);

    \draw[low] (i13) to[bend right=10] (o1);
    \draw[high] (i13) to[bend right=15] (o2);
    \draw[low] (o1) to[out=-170,in=90] (t3.60);
    \draw[high] (o1) to[out=-55,in=90] (t1);

    \draw[low] (i14) -- (i23);
    \draw[high] (i14) -- (i24);

    \draw[low] (i23) -- (o3);
    \draw[high] (i23) -- (o1);
    \draw[low] (i24) -- (o3);
    \draw[high] (i24) -- (o2);

    \draw[low] (o2) to[out=-135,in=90,looseness=1.6] (t2.60);
    \draw[high] (o2) to[out=-55,in=90] (t4);

    \draw[low,winedge] (o3) to[out=-135,in=90,looseness=1.1] (t0.50);
    \draw[high] (o3) -- (t5);

  \end{tikzpicture}
  \caption{SCC $\{\varphi_5\}$ backtracked}
\end{subfigure}
  \begin{subfigure}{0.45\textwidth}\centering
  \begin{tikzpicture}[initial text={},yscale=0.75,baseline=(current bounding box.center)]
    \node[root,lose] at (0,0) (r2) {$\varphi_4$};
    \node[root,lose] at (.7,0) (r3) {$\varphi_3$};
    \node[root, accepting,win] at (1.4,0) (r0) {$\ttrue$};
    \node[root, accepting,win, initial above] at (2.1,0) (r4) {$\varphi_1$};
    \node[root, accepting,win] at (2.8,0) (r1) {$\varphi_2$};
    \node[root,win] at (3.5,0) (r5) {$\varphi_5$};

    \node[inode,lose] at (0,-1) (i11) {$i_1$};
    \node[inode,lose] at (.7,-1) (i12) {$i_1$};
    \node[inode,win] at (2.1,-1) (i13) {$i_1$};
    \node[inode,win] at (2.8,-1) (i14) {$i_1$};

    \node[inode,lose] at (0.4,-2) (i21) {$i_2$};
    \node[inode,lose] at (1,-2) (i22) {$i_2$};
    \node[inode,win] at (2.5,-2) (i23) {$i_2$};
    \node[inode,win] at (3.1,-2) (i24) {$i_2$};

    \node[inode,win] at (1.8,-3) (o1) {$o$};
    \node[inode,win] at (2.5,-3) (o2) {$o$};
    \node[inode,win] at (3.5,-3) (o3) {$o$};

    \node[termn,lose] at (0,-4) (t3) {$\varphi_3$};
    \node[termn,lose] at (.7,-4) (t2) {$\varphi_4$};
    \node[terma,win] at (1.4,-4) (t0) {$\ttrue$};
    \node[terma,win] at (2.1,-4) (t1) {$\varphi_2$};
    \node[terma,win] at (2.8,-4) (t4) {$\varphi_1$};
    \node[termn,win] at (3.5,-4) (t5) {$\varphi_5$};

    \draw[sccr] ($(r2.north west)+(-.5mm,.5mm)$) -| ($(r3.south east)+(.5mm,0)$) -- ($(i22.east)+(.5mm,0)$) -- ($(t2.north east)+(.5mm,0)$) |- ($(t3.south west)+(-.5mm,-.5mm)$) -- cycle;
    \draw[scca] ($(r4.north west)+(-.5mm,.5mm)$) -| ($(r1.south east)+(.5mm,0)$) -- ($(i24.east)+(.5mm,0)$) -- (t4.east |- o2) -- ($(t4.north east)+(.5mm,0)$) |- ($(t1.south west)+(-.5mm,-.5mm)$) -- ($(t1.north west)+(-.5mm,0)$) -- ($(o1.west)+(-.5mm,0)$) -- ($(i13.west)+(-.5mm,0)$) -- ($(r4.south west)+(-.5mm,0)$) --cycle;
    \draw[scca] ($(r0.north west)+(-.5mm,.5mm)$) -| ($(r0.south east)+(.5mm,0)$) -- ($(o1.west)+(-1mm,0)$) -- ($(t0.north east)+(.5mm,0)$) |- ($(t0.south west)+(-.5mm,-.5mm)$) -- ($(t0.north west)+(-.5mm,0)$) -- ($(i22.east)+(1mm,0)$) -- ($(r0.south west)+(-.5mm,0)$) -- cycle;
    \draw[sccr] ($(r5.north west)+(-.5mm,.5mm)$) -| ($(t5.south east)+(.5mm,-.5mm)$) -| ($(t5.north west)+(-.5mm,0)$) -- ($(o3.west)+(-.5mm,0)$) -- ($(i24.east)+(1mm,0)$) -- ($(r5.south west)+(-.5mm,0)$)  -- cycle;

     \draw[rlink] (r2) -- (i11);
    \draw[rlink] (r3) -- (i12);
    \draw[rlink] (r4) -- (i13);
    \draw[rlink] (r1) -- (i14);
    \draw[rlink] (r5) -- (o3);
    \draw[rlink] (r0) to[out=-90,in=90] (t0.75);

    \draw[low] (i11) to[out=-110,in=90] (t3.120);
    \draw[high] (i11) to[out=-95,in=90] (t2.120);
    \draw[low] (i12) -- (i21);
    \draw[high] (i12) -- (i22);
    \draw[low] (i21) to[out=-70,in=90] (t0.130);
    \draw[high] (i21) to[out=-110,in=90] (t3.90);
    \draw[low] (i22)  to[out=-70,in=90]  (t0.105);
    \draw[high] (i22)  to[out=-110,in=90]  (t2.90);

    \draw[low] (i13) to[bend right=10] (o1);
    \draw[high] (i13) to[bend right=15] (o2);
    \draw[low] (o1) to[out=-170,in=90] (t3.60);
    \draw[high,winedge] (o1) to[out=-55,in=90] (t1);

    \draw[low] (i14) -- (i23);
    \draw[high] (i14) -- (i24);

    \draw[low] (i23) -- (o3);
    \draw[high] (i23) -- (o1);
    \draw[low] (i24) -- (o3);
    \draw[high] (i24) -- (o2);

    \draw[low] (o2) to[out=-135,in=90,looseness=1.6] (t2.60);
    \draw[high,winedge] (o2) to[out=-55,in=90] (t4);

    \draw[low,winedge] (o3) to[out=-135,in=90,looseness=1.1] (t0.50);
    \draw[high] (o3) -- (t5);

  \end{tikzpicture}
  \caption{SCC $\{\varphi_1,\varphi_2\}$ backtracked}
\end{subfigure}
\caption{Solving the example from Figure~\ref{fig:example} as a game, while it is being constructed.\label{fig:otf}}
\end{figure}

Figure~\ref{fig:otf} shows the step-by-step construction of our
example MTDWA, solving it as a game along the way.  The automaton is
explored in DFS order because the goal is to find strongly-connected
components.

In the first steps (a)--(d), the DFS explores states $\varphi_1$,
$\varphi_4$, $\varphi_3$, and $\ttrue$ in that order.  For each of
those, $\Delta(\varphi_i)$ is computed (remember that
$\Delta(\varphi_i)$ is $\tr(\varphi_i)$ from Section~\ref{sec:tr},
followed by propositional equivalence from Section~\ref{sec:propeq}).
During this exploration, the SCC algorithm updates its data structures
(not shown nor discussed here) for the edges that are discovered.  For
instance during step (c) we learn that $\varphi_3$ and $\varphi_4$
belong to the same component, but we do not know if this component is
maximal.  We know that a component is maximal when the DFS backtracks
from it.  This happens for the first time at step (e).  At this point
we know that the component containing $\{\ttrue\}$ is maximal, and
because $\lambda(\ttrue)=\top$ we can mark any undetermined state as
winning (green) for the output player.  Marking the
\tikz[baseline=(X.base)]\node[terma,win](X){$\ttrue$}; terminal as
winning will immediately attempt to backpropagate to previous states.
Here, nothing happens, because the predecessors of
\tikz[baseline=(X.base)]\node[terma,win](X){$\ttrue$}; are nodes played
by the input player, who still has other options.  (Our previous
work~\cite{duret.25.ciaa} describes the data-structures we use to
perform backpropagation in an on-the-fly context.)  At step (f), all
successors of $\varphi_3$ and $\varphi_4$ have been explored, so we
know the SCC containing $\{\varphi_3,\varphi_4\}$ is maximal.  Since
$\lambda(\varphi_3)=\lambda(\varphi_4)=\bot$ any undetermined node
in that SCC (in this case, all of them) can be marked (in red) as
losing for the output player.  The DFS continues to explore the successors
of $\varphi_2$ and $\varphi_5$ on steps (g) and (h).  Step (g) has
one interesting bit: when the rightmost node
\tikz[baseline=(X.base)]\node[inode](X){$o$}; (who is played by the
output player) is connected to
\tikz[baseline=(X.base)]\node[terma,win](X){$\ttrue$}; (who is already
known to be winning), then
\tikz[baseline=(X.base)]\node[inode,win](X){$o$}; is automatically
marked as winning too.  Any such determination is automatically
backpropagated, but at step (g) the predecessor of
\tikz[baseline=(X.base)]\node[inode,win](X){$o$}; still has other
choices so nothing happens.  Step (h) computes $\Delta(\varphi_5)$ and
connects $\varphi_5$ to
\tikz[baseline=(X.base)]\node[inode,win](X){$o$};.  Since this is the
only possible choice, $\varphi_5$ can be marked as winning too.  At
step (i), SCC $\{\varphi_5\}$ is backtracked, since
$\lambda(\varphi_5)=\bot$, all its undetermined nodes (there are none
of them) can be marked as losing.  Finally, at step (j) the SCC for
$\{\varphi_1,\varphi_2\}$ is backtracked, and all its undetermined
nodes (all of them) are marked as winning.

In practice the construction can stop as soon as the initial state is
found winning (resp. losing): this shows that its formula is
realizable (resp. unrealizable).  In this example, doing the
construction on-the-fly did not help because the initial state was
only found winning after building the entire automaton.

If the goal is synthesis, not just realizability, the winning nodes
labeled by output variables should also remember the choice that made
them winning.  The green arrows in Figure~\ref{fig:otf} highlight
this.  A winning strategy can be found by ``killing the other choice''
from each winning output node, i.e., redirecting it to a rejecting
$\ffalse$ node, as shown in Figure~\ref{fig:mealy}.  The resulting Mealy
machine is then simplified using techniques described in previous
work~\cite{renkin.22.forte}.

\begin{figure}[tbp]
  \centering
  \begin{tikzpicture}[initial text={},yscale=0.75,baseline=(current bounding box.center)]
    \node[root, accepting] at (3.5,0) (r0) {$\ttrue$};
    \node[root, accepting, initial above] at (2.1,0) (r4) {$\varphi_1$};
    \node[root, accepting] at (2.8,0) (r1) {$\varphi_2$};

    \node[inode] at (2.1,-1) (i13) {$i_1$};
    \node[inode] at (2.8,-1) (i14) {$i_1$};

    \node[inode] at (2.5,-2) (i23) {$i_2$};
    \node[inode] at (3.1,-2) (i24) {$i_2$};

    \node[inode] at (1.8,-3) (o1) {$o$};
    \node[inode] at (2.5,-3) (o2) {$o$};
    \node[inode] at (3.2,-3) (o3) {$o$};

    \node[terma] at (1.4,-4) (t1) {$\varphi_2$};
    \node[terma] at (2.1,-4) (t4) {$\varphi_1$};
    \node[termn,killed] at (2.8,-4) (t5) {$\ffalse$};
    \node[terma] at (3.5,-4) (t0) {$\ttrue$};

    \draw[rlink] (r4) -- (i13);
    \draw[rlink] (r1) -- (i14);
    \draw[rlink] (r0) to[out=-90,in=90] (t0.70);

    \draw[low] (i13) to[bend right=10] (o1);
    \draw[high] (i13) to[bend right=15] (o2);
    \draw[low,killed] (o1) to[out=-45,in=90] (t5.120);
    \draw[high,winedge] (o1) to[out=-135,in=90] (t1);

    \draw[low] (i14) -- (i23);
    \draw[high] (i14) -- (i24);

    \draw[low] (i23) -- (o3);
    \draw[high] (i23) -- (o1);
    \draw[low] (i24) -- (o3);
    \draw[high] (i24) -- (o2);

    \draw[low,killed] (o2) to[out=-45,in=90,looseness=1.6] (t5.90);
    \draw[high,winedge] (o2) to[out=-135,in=90] (t4);

    \draw[low,winedge] (o3) to[out=-45,in=90,looseness=1.1] (t0.100);
    \draw[high,killed] (o3) to[out=-135,in=90] (t5.60);
  \end{tikzpicture}
  \qquad
  \def\conj{}
  \begin{tikzpicture}[mediumautomaton,xscale=0.8,baseline=(current bounding box.center)]
    \node[initial,state] (4) {$\varphi_1$};
    \node[state,right=of 4] (1) {$\varphi_2$};
    \node[state,right=of 1] (0) {$\ttrue$};
    \path[->] (4) edge[loop above] node[auto]{$i_1 / o$} (4)
    (1) edge[loop above] node[auto]{$\bar{i_1}\conj i_2 / o$} (1)
    (0) edge[loop above] node[auto]{$\top/\top$} (0)
    (4) edge[bend right=20] node[above=-1pt]{$\bar{i_1}/ o$} (1)
    (1) edge[bend right=20] node[above=-1pt]{$i_1\conj i_2/ o$} (4)
    (1) edge node[above=-1pt]{$\bar{i_2}/ \bar{o}$} (0);
  \end{tikzpicture}

\caption{Strategy extracted from Figure~\ref{fig:otf}, represented using MTBDDs, or as an incompletely specified Mealy machine.\label{fig:mealy}}
\end{figure}

\section{Artifact}\label{app:artifact}

An artifact archived on Zenodo~\cite{duret.26.zenodo} contains:
\begin{itemize}
\item A copy of Spot 2.15.1 that implements the presented techniques.
\item A jupyter notebook showing how to reproduce Figures~\ref{fig:example},
  \ref{fig:otf}, and \ref{fig:mealy}.
\item Source, results, and scripts to reproduce the two benchmarks presented
  in Section~\ref{sec:evaluation}.
\item Scripts to convert the benchmarks results into Figures~\ref{fig:cactus-trans},
  \ref{fig:cactus-time}, \ref{fig:trans-time}, \ref{fig:scatter-trans}, and
  \ref{fig:cactus-mem}.
\item A docker image allowing to execute all of the above, except the
  benchmark of Section~\ref{sec:bench-synthesis} which uses BenchExec
  (cannot run in Docker).
\end{itemize}

\section{Detailed Translation Benchmark}\label{app:trans-benchmark}

This section digs a bit more into the translation benchmark summarized
in Section~\ref{sec:trans-benchmark}.  Recall that the formulas
are of three different kinds.  We use the following names:

\begin{description}
\item[\textsf{scalable}] 310 formulas representing various instances of
  17 scalable patterns~\cite{geldenhuys.06.spin,cichon.09.depcos,geldenhuys.06.spin,kupferman.11.mochart,tabakov.10.rv}.  These patterns are summarized in Table~\ref{tab:scalable-sources}.
\item[\textsf{set}] 76 formulas collected from various works~\cite{dwyer.98.fmsp,somenzi.00.cav,etessami.00.concur,pelanek.07.spin,holevek.04.tr}, these are the last six lines of Table~\ref{tab:obligfreq}.
\item[\textsf{syntcomp}] 108 unique formulas from the synthesis
  competition~\cite{jacobs.22.arxiv}: all instances that are
  syntactic obligations, except for two formulas syntactically equivalent
  to true or false.  This is the first line of
  Table~\ref{tab:obligfreq}.
\end{description}

\begin{table}[tb]
  \caption{Parametric patterns used to evaluate the scalability of the
    translation.  All these patterns can be generated by the
    \texttt{genltl} command of Spot.  The class column shows the
    more precise class the pattern belongs to: (S) syntactic safety,
    (G) syntactic guarantee, or (O)
    syntactic obligation.\label{tab:scalable-sources}}
  \begin{tabular}{lcl}
    \toprule
    name & \llap{cl}ass & formula \\
    \midrule
    \texttt{and-f}~\cite{geldenhuys.06.spin}& G & $\F(p_1)\land\F(p_2)\land\ldots\land\F(p_n)$ \\
    \texttt{ccj-alpha}~\cite{cichon.09.depcos} & G & $\F(p_1\land\F(p_2\land\F(p_3\!\ldots\land\F(p_n))))\land\F(q_1\land\F(q_2\land\F(q_3\!\ldots\land\F(q_n))))$\\
    \texttt{ccj-beta}~\cite{cichon.09.depcos} & G & $\F(p\land\X(p\land\X(p\ldots\land\X(p))))\land\F(q\land\X(q\land\X(q\ldots\land\X(q))))$\\
    \texttt{ccj-beta-prime}~\cite{cichon.09.depcos} & G & $\F(p\land\X(p)\land\X^2(p)\ldots\land\X^n(p))))\land\F(q\land\X(q)\land\X^2(q)\ldots\land\X^n(q))$\\
    \texttt{chomp-mealy$m$} & O & two-player $m\times n$ \href{https://en.wikipedia.org/wiki/Chomp}{chomp game} with Mealy semantics \\
    \texttt{gh-q}~\cite{geldenhuys.06.spin}& O & $(\F(p_1)\lor\G(p_2))\land(\F(p_2)\lor\G(p_3))\land\ldots\land(\F(p_n)\lor\G(p_{n+1}))$ \\
    \texttt{kr-n-delta1} (\ref{app:krndelta1}) & O & formula of size $\mathrm{O}(n)$, with doubly-exponential minimal DBA \\
    \texttt{kr-nlogn-delta1} & O & formula of size $\mathrm{O}(n\log n)$, with doubly-exponential minimal DBA \\
    \texttt{or-g}~\cite{geldenhuys.06.spin}& S & $\G(p_1)\lor\G(p_2)\lor\ldots\lor\G(p_n)$ \\
    \texttt{r-left} & S & $((p_1 \R p_2) \R p_3) \ldots \R p_n$ \\
    \texttt{r-right} & S & $p_1 \R (p_2 \R (\ldots \R p_n))$ \\
    \texttt{tv-f1}~\cite{tabakov.10.rv}& S & $\G(p \limplies (q \lor \X q \lor \X^2q \lor \ldots \lor \X^nq))$ \\
    \texttt{tv-f2}~\cite{tabakov.10.rv} & S & $\G(p \limplies (q \lor \X(q \lor \X(q \ldots \lor \X q))))$ \\
    \texttt{tv-g1}~\cite{tabakov.10.rv}& S & $\G(p \limplies (q \land \X q \land \X^2q \land \ldots \land \X^nq))$ \\
    \texttt{tv-g2}~\cite{tabakov.10.rv} & S & $\G(p \limplies (q \land \X(q \land \X(q \ldots \land \X q))))$ \\
    \texttt{u-left}~\cite{geldenhuys.06.spin}& G & $((p_1 \U p_2) \U p_3) \ldots \U p_n$ \\
    \texttt{u-right}~\cite{geldenhuys.06.spin}& G & $p_1 \U (p_2 \U (\ldots \U p_n))$ \\
    \bottomrule
  \end{tabular}
\end{table}

\begin{figure}[tbp]
  \includegraphics[width=\textwidth]{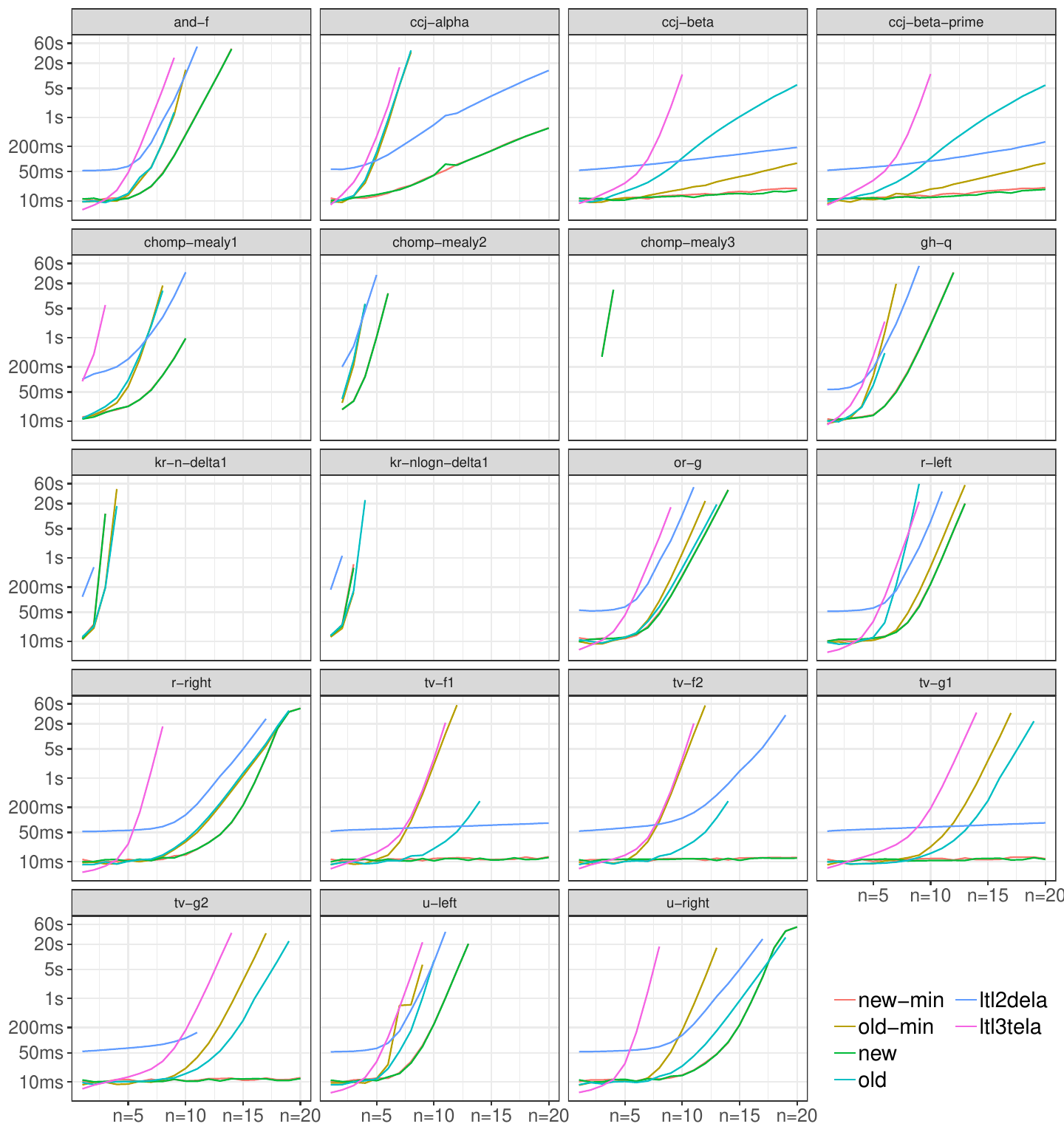}
  \caption{Execution times of the different configurations, on the parametric formulas listed in Table~\ref{tab:scalable-sources}\label{fig:trans-time}}
  \includegraphics[width=\textwidth]{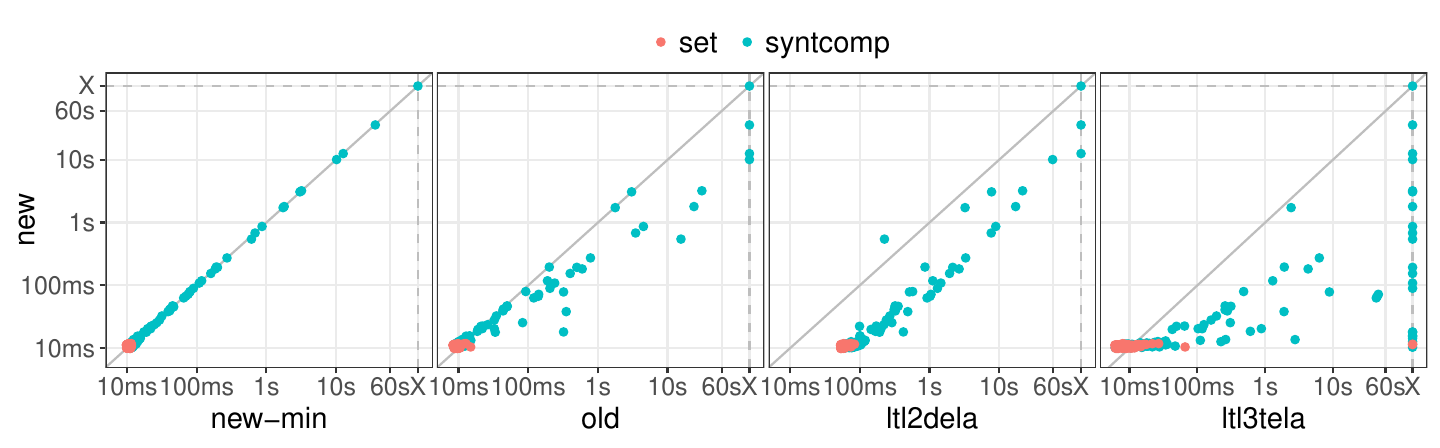}
  \caption{Runtime comparisons over 184 non-scalable formulas from various sources.\label{fig:scatter-trans}}
\end{figure}

Figure~\ref{fig:trans-time} shows the runtime of four different
configurations of \texttt{ltl2tgba} as well as that of
\texttt{ltl2dela} and \texttt{ltl3tela}, over the different patterns
of Table~\ref{tab:scalable-sources}.

In these measurements, we can see that the \pnew{} and
\pnewmin{} pipelines are almost indistinguishable, suggesting
that the minimization is very cheap on these automata.  We looked at
the size of the automata (not shown here), to confirm that on these
scalable patterns the \pnew{} always produces a DWA of optimal
size, making the minimization unnecessary in this case.  The
\pnew{} and \pnewmin{} pipelines also appear to be faster
than everything else, often by some order of magnitude (the time scale
is logarithmic!), with the exception of the \texttt{kr-n-delta1} and
\texttt{kr-nlogn-delta1} formulas.
\marginpar{Cf. App.~\ref{app:krndelta1}.}  On these two families, the
\pold{} pipeline is faster.  Profiling the code for
\texttt{kr-nlogn-delta1} with $n=3$ revealed that on these families
the MTBDD-based translation applying propositional equivalence would build
a 7559-state deterministic automaton that is then minimized to 6207 states.
Couvreur's algorithm also has a form of propositional equivalence, but
it only builds a 1381-state NBA that is then determinized to 6454
states and minimized to 6207 states: a lot of time is saved because
the propositional equivalence does not have to be performed during the
determinization.

In families \textsf{tv-f1}, \textsf{tv-f2}, \textsf{tv-g1}, and
\textsf{tv-g2}, the flat lines for \pnew{} and \pnewmin{} are
mainly due to the fact that propositional equivalence is pushing $\X$ operators
through Boolean formulas, as discussed in Section~\ref{sec:propeq}.

Figure~\ref{fig:scatter-trans} compares \pnew{} against \pnewmin{},
\pold{}, \ltldela{}, and \ltltela{} on 184 syntactic-obligation formulas
from the \textsf{set} and \textsf{syntcomp}.

As can be seen on these plots, the \textsf{set} group of formulas is
not very challenging.

For the \textsf{syntcomp} formulas, we can see again that the new
translation is a clear improvement over \pold{} and the
third-party tools \ltldela{} and \ltltela{}, and that
minimization (\pnewmin) incurs no overhead on these formulas.

It should also be pointed out that \ltltela{} uses a portfolio
approach that involves Spot already: it will first run its own
algorithm, then run Spot's, and return the smallest result.  Its
runtime actually benefits from any improvement to the
translation-pipeline of Spot in general.  In the case of
syntactic obligations, however, since the result of Spot is guaranteed to be minimal, \textsf{ltl3tela}'s first attempt at translating the formula by other means is just unnecessary.

\section{\texttt{kr-n-delta1}}\label{app:krndelta1}

The following formula is a
\href{https://www.lrde.epita.fr/dload/spot/mochart10-fixes.pdf}{fixed version}
of a ``worst-case LTL formula'' for DBA translation, published by
Kupferman \& Rosenberg~\cite{kupferman.11.mochart}.  For a parameter
$n$, the formula has size $\mathrm{O}(n)$ but any equivalent
DBA requires at least $\mathrm{O}(2^{2^n})$ states.

\begin{align}
\# &\land \X\left(a_1 \lor b_1 \lor \$\right)\\
& \land \G\left(\bigwedge_{i=1}^{n-1}\left(\left(a_i \lor b_i\right) \rightarrow \X\left(a_{i+1} \lor b_{i+1}\right)\right)\right) \\
& \land \G\left(\left(a_n \lor b_n\right) \rightarrow \X\left(\# \land \X\left(a_1 \lor b_1 \lor \$ \lor \G \#\right)\right)\right) \\
& \land(\neg \$) \U\left(\$ \land \X\left(\left(a_1 \lor b_1\right) \land \X^k \G \#\right)\right) \label{eq:nonnoblig}\\
& \land \F\left(\# \land \X\left(\neg \# \land\left(\bigvee_{i=1}^n\left(\left(a_i \land \F\left(\$ \land \F a_i\right)\right) \lor\left(b_i \land \F\left(\$ \land \F b_i\right)\right)\right)\right) \U \#\right)\right) \\
& \land \G\left((\# \lor \$) \rightarrow \neg \bigvee_{i=1}^n\left(a_i \lor b_i\right)\right) \land \G(\# \rightarrow \neg \$) \land \G\left(\bigwedge_{i=1}^n\left(a_i \rightarrow \neg b_i\right)\right)
\end{align}

This formula is not a syntactic obligation because equation
\eqref{eq:nonnoblig} contains a $\G$ operator in the right operand of
a $\U$.

However, because proposition $\#$ will eventually occur continuously,
equation \eqref{eq:nonnoblig} can be replaced by the following
equivalent syntactic obligation:
\begin{align*}
  &\land\left((\neg \$) \W\left(\$ \land \X\left(\left(a_1 \lor b_1\right) \land \X^k \G \#\right)\right)\right) \land \F(\#)
\end{align*}

This formula also assumes a letter-alphabet.  To use it in our
context where the alphabet is propositional, we interpret each letter
as a proposition, and add some constraints to ensure that only one
proposition is true at any time, as discussed by Kupferman \&
Rosenberg~\cite{kupferman.11.mochart}.

This modified version is what we call \texttt{kr-n-delta1} in
Table~\ref{tab:scalable-sources}.  \texttt{kr-nlogn-delta1} can be obtained
by a similar modification to the quasilinear formula from the same paper.

\section{\texttt{ltlsynt}'s Architecture}\label{app:ltlsyntarch}

\begin{figure}[tb]
\resizebox{\textwidth}{!}{
  \begin{tikzpicture}[thick,font=\sffamily,>={Stealth[round,bend]},
                      node distance=4mm and 5mm,
                      data/.style={text width=1.3cm,minimum height=2em,align=center,draw,fill=magenta!15},
                      wproc/.style={fill=yellow!20,align=center,draw,rounded corners=2mm},
                      proc/.style={wproc,text width=1.5cm,minimum height=2em},
                      lproc/.style={proc,text width=2cm},
                      choice/.style={circle,fill=yellow!20,inner sep=0,minimum size=2mm,draw},
                      algoopt/.style={postaction={decorate,decoration={text along path, raise=1mm,text={|\ttfamily\small|#1}}}},
                      algooptd/.style={postaction={decorate,decoration={text along path, raise=-1em,text={|\ttfamily\small|#1}}}},
                      outopt/.style={postaction={decorate,decoration={text along path, text align=right,raise=1mm,text={|\ttfamily\small|#1}}}},
                      ]
    \node[proc] (transSD) {translate to NBA};
    \node[proc,right=of transSD] (splitSD) {split I/O};
    \node[lproc,right=of splitSD] (detSD) {determinize to DPA};
    \draw[->] (transSD) edge (splitSD)
              (splitSD) edge (detSD);
    \node[proc,below=of transSD] (transDS) {translate to NBA};
    \node[lproc,right=of transDS] (detDS) {determinize to DPA};
    \node[proc,right=of detDS,xshift=2mm] (splitDS) {split I/O};
    \draw[->] (transDS) edge (detDS)
              (detDS) edge coordinate(middetsplit) (splitDS);
    \node[proc,below=of transDS,dashed] (transLARold) {translate to DELA};
    \node[lproc,right=of transLARold] (paritizeLARold) {paritize (pure CAR)};
    \draw[->] (transLARold) edge (paritizeLARold)
              (paritizeLARold) edge (middetsplit |- paritizeLARold);
    \node[proc,below=of transLARold,dashed] (transLAR) {translate to DELA};
    \node[lproc,right=of transLAR] (paritizeLAR) {paritize (CAR,IAR,{\tiny ...})};
    \draw[->,color1,very thick] (transLAR) edge (paritizeLAR)
    (paritizeLAR) edge (middetsplit |- paritizeLAR);

    \node[proc,below=of transLAR,dashed] (transACD) {translate to DELA};
    \node[lproc,right=of transACD] (paritizeACD) {paritize (ACD)};
    \draw[->] (transACD) edge (paritizeACD)
              (paritizeACD) edge (middetsplit |- paritizeACD);

    \node[proc,below=of transACD,dashed] (transPS) {translate to DPA};
    \draw[rounded corners,->] (transPS) -| (middetsplit);

    \draw[->,color1,very thick] (middetsplit |- paritizeLAR) -- (middetsplit);

    \coordinate (choicecenter) at ($(current bounding box.west) - (3.3cm,0)$);
    \path[draw] (choicecenter) node[choice](algoin){}
          +(60:8mm) node[choice](algosd){}
          +(36:8mm) node[choice](algods){}
          +(12:8mm) node[choice](algolarold){}
          +(-12:8mm) node[choice](algolar){}
          +(-36:8mm) node[choice](algoacd){}
          +(-60:8mm) node[choice](algops){};

    \draw[->,algoopt={-{}-algo=sd}] (algosd) to[out=60,in=180] (transSD);
    \draw[->,algoopt={-{}-algo=ds}] (algods) to[out=36,in=180] (transDS);
    \draw[->,algoopt={-{}-algo=lar.old}] (algolarold) to[out=12,in=180] (transLARold);
    \draw[->,algoopt={-{}-algo=lar},color1,very thick] (algolar) to[out=-12,in=180] (transLAR);
    \draw[->,algoopt={-{}-algo=acd}] (algoacd) to[out=-36,in=180] (transACD);
    \draw[->,algoopt={-{}-algo=ps}] (algops) to[out=-60,in=180] (transPS);
    \draw[ultra thick,color1] (algoin) -- (algolar.north);

    \node[choice,left=of algoin] (bypassin) {};
    \node[proc,left=of bypassin,yshift=1cm,text width=1.6cm] (decomp) {decompose};
    \node[data,above=of decomp] (input) {LTL input};
    \draw[->] (input) -- (decomp);
    \draw[->] (decomp) to[out=-45,in=180] (bypassin);
    \draw[->,gray] (decomp) to[out=-55,in=180] ($(bypassin.west)+(0,-0.33)$);
    \draw[->,gray] (decomp) to[out=-65,in=180] ($(bypassin.west)+(0,-0.66)$);
    \draw[->,gray,algooptd={-{}-decompose}] (decomp) to[out=-75,in=180] ($(bypassin.west)+(0,-1)$);
    \draw[->,color1,very thick] (bypassin) -- (algoin);

    \node[lproc, right=of detSD, xshift=4mm] (solve) {solve\linebreak parity game};

    \node[below=of solve,choice] (reachoice) {};
    \node[below right=of reachoice,choice,xshift=-3mm,yshift=-1mm] (realize) {};
    \node[below left=of reachoice,choice,xshift=3mm,yshift=-1mm] (aiger) {};
    \draw[ultra thick] (reachoice) -- (realize.west);
    \draw[->,color1,very thick] (solve) -- (reachoice);

    \node[data] (yesno) at ($(transPS -| realize)+(0,-2mm)$) {Y/N output};
    \node[data, text width=1.2cm, left=of yesno] (output) {AIGER output};
    \node[proc, above=of output,yshift=-1mm] (encode) {encode in AIGER};
    \node[proc, above=of encode,yshift=+1mm] (minimize) {simplify\linebreak strategy};

    \draw[->] (detSD) -- coordinate(middetsolve) (solve);
    \draw[->,color1,very thick] (middetsplit) -> (splitDS);

    \draw[rounded corners,->,color1,very thick] (splitDS) -| (middetsolve);
    \draw[->,color1,very thick] (middetsolve) -- (solve);

    \draw[<-,outopt={-{}-aiger}] (minimize) to[out=90,in=-160]  (aiger);
    \draw[->] (minimize) -- (encode);
    \draw[gray,->] ($(encode.north)+(2mm,3mm)$) -- ($(encode.north)+(2mm,0)$);
    \draw[gray,->] ($(encode.north)+(4mm,3mm)$) -- ($(encode.north)+(4mm,0)$);
    \draw[gray,->] ($(encode.north)+(6mm,3mm)$) -- ($(encode.north)+(6mm,0)$);
    \draw[<-,outopt={-{}-realizability~}] (yesno) -- (realize);
    \draw[gray,->] ($(yesno.north)+(2mm,3mm)$) -- ($(yesno.north)+(2mm,0)$);
    \draw[gray,->] ($(yesno.north)+(4mm,3mm)$) -- ($(yesno.north)+(4mm,0)$);
    \draw[gray,->] ($(yesno.north)+(6mm,3mm)$) -- ($(yesno.north)+(6mm,0)$);

    \draw[->] (encode) -- (output);

  \begin{scope}[on background layer,overlay]
    \fill[cyan!20,rounded corners=5mm]
    ($(algoin |- transPS.south)+(-.4,-.2)$) |-
    ($(detSD.north -| middetsolve)+(.2,.2)$) |-
    ($(splitDS.south west)+(.2,-.2)$) |- cycle;
  \end{scope}

  \coordinate (solveright) at ($(solve.east)+(2mm,0)$);
  \path (bypassin) -- coordinate(medblue) (solveright);
  \node[wproc,above,xshift=-3mm,yshift=4mm,anchor=south east] (bypass) at (solve.north east) {specialized strategy construction for formulas of the form $\mathsf{G}(b_1) \land (\varphi \leftrightarrow \mathsf{G}\mathsf{F} b_2)$};
  \node[wproc,above,yshift=4mm,anchor=south east] (oblig) at (bypass.north east) {specialized MTBDD-based on-the-fly synthesis for syntactic obligations};
  \draw[rounded corners,->] (bypassin) |- node[at end,anchor=south east]{\texttt{--bypass}} (bypass);
  \draw[rounded corners,->,color3,very thick] (bypassin) |- node[at end,anchor=south east]{\texttt{--obligation-synthesis}} (oblig);
  \draw[rounded corners,->] (bypass) -| (solveright) |- (reachoice);
  \draw[rounded corners,->,color3,very thick] (oblig) -| (solveright) |- (reachoice);
  \end{tikzpicture}}
\caption{Overview of \texttt{ltlsynt}'s architecture, taken from its documentation.\label{fig:ltlsynt-arch}}
\end{figure}

Figure~\ref{fig:ltlsynt-arch} shows the current architecture of
\texttt{ltlsynt}.  The green path at the top of the figure corresponds
to the description of Section~\ref{sec:synthesis}.  It is what is
described as \newsynt{} in Figures~\ref{fig:cactus-time} and
\ref{fig:cactus-mem}.

The large blue area contains multiple ways to convert a specification
into a parity game.  The path highlighted in red is the current
default of \texttt{ltlsynt}.  It will be used if the specialized
techniques do not apply, or if they are disabled (for instance by
setting \texttt{-{}-obligation-synthesis=no}).

The dashed nodes are all using the \texttt{spot::translator} class to
create a deterministic automaton.  They correspond to the dashed block
in Figure~\ref{fig:trans-pipeline}, so they still benefit from the
translation improvement described in Section~\ref{sec:translation}.
In Figures~\ref{fig:cactus-time} and \ref{fig:cactus-mem}, the
\oldtrans{} and \newtrans{} configurations correspond to taking the
red path of Figure~\ref{fig:ltlsynt-arch}, and taking the
\poldmin{} or \pnewmin{} paths in the dashed block of
Figure~\ref{fig:trans-pipeline}.

\section{Memory Usage}\label{app:mem}

\begin{figure}[tb]
  \includegraphics[width=\textwidth]{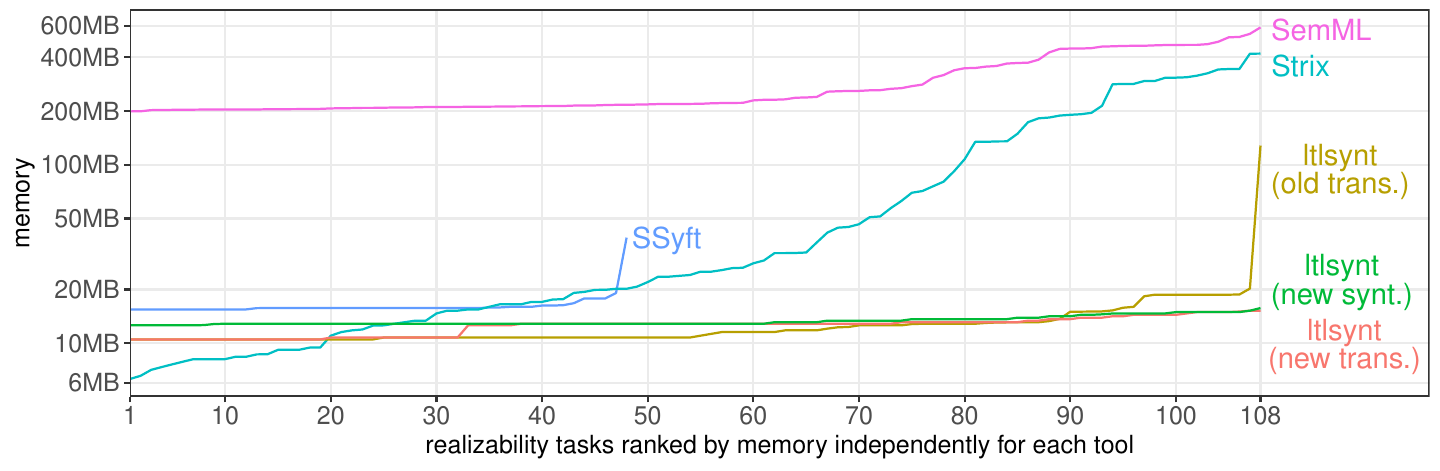}
  \caption{Memory usage for the experiment of
    Figure~\ref{fig:cactus-time}.\label{fig:cactus-mem}}
\end{figure}

Since we used BenchExec for the experiments of
Section~\ref{sec:bench-synthesis}, we can also report about the memory
usage of each tested configuration.  This is plotted in
Figure~\ref{fig:cactus-mem}.

\section{SemML}\label{app:semml}

Our benchmark is not very favorable to \semml, because this tool is
geared toward difficult specifications that are not solved in a few
seconds, using Machine Learning techniques to guide the on-the-fly
construction of the game.

The large runtime observed in Figure~\ref{fig:cactus-time} can be attributed
to the cost of starting a JVM, loading the ML models from files, and
initiating the whole process from a Python script.  For all other
tools, a native binary is called.

\semml{} was the winner of the LTL realizability track of SyntComp'24.
It was outranked by \texttt{ltlsynt} in SyntComp'25 only because of a
small technical issue: its starting script passes the LTL
specification on the command line.  This fails for specifications that
are larger than what a command line can accommodate.  \semml{} failed to
process at least 50 specifications because of this issue.
\texttt{ltlsynt} had no problem reading these large formulas, because
it read them from a pipe.

Those large specifications are not syntactic obligations, so this
was not a problem for our benchmark.

\section{Obligations After Decomposition}\label{app:decomp}

We mention in Section~\ref{sec:decompmention} that some synthesis
specifications can be decomposed into a conjunction of output-disjoint
sub-specifications that can be solved
independently~\cite{finkbeiner.21.nfm,renkin.23.fmsd}.  If the
original specification is not a syntactic obligation, it was not
included in our benchmark.  However, it could be the case that some of the
sub-specifications are syntactic obligations.  Although we did not
evaluate those, they will benefit from the improved game solving.
Table~\ref{tab:decomp} shows the list of such specifications in the
SyntComp benchmark collection.  For instance \texttt{Gamelogic.tlsf} contains 4
sub-specifications, and 3 of those are syntactic obligations.

\begin{table}[tb]
  \centering
  \begin{tabular}{>{\ttfamily}lcc}
    \toprule
    \textrm{file} & obligations/subs \\
    \midrule
    tsl\_paper/SensorInit.tlsf & 1/2 \\
    tsl\_paper/Gamelogic.tlsf & 3/4 \\
    tsl\_paper/KitchenTimerV7.tlsf & 1/2 \\
    tsl\_paper/ModdifiedLedMatrix4X.tlsf & 2/3 \\
    tsl\_paper/KitchenTimerV8.tlsf & 2/3 \\
    tsl\_paper/KitchenTimerV9.tlsf & 1/2 \\
    tsl\_paper/ModdifiedLedMatrix5X.tlsf & 2/3 \\
    tsl\_paper/KitchenTimerV10.tlsf & 1/2 \\
    tsl\_smart\_home\_jarvis/.../Light\_a50cadd7.tlsf & 1/2 \\
    tsl\_smart\_home\_jarvis/.../test4\_4b82b1f4.tlsf & 2/3 \\
    tsl\_smart\_home\_jarvis/.../Room\_a50cadd7.tlsf & 1/3 \\
    tsl\_smart\_home\_jarvis/.../test4\_4b82b1f4\_1.tlsf & 1/2 \\
    tsl\_smart\_home\_jarvis/.../Example1\_d6376bf9.tlsf & 2/3 \\
    tsl\_smart\_home\_jarvis/.../jarvis\_philippe\_484face8\_2.tlsf & 4/5 \\
    tsl\_smart\_home\_jarvis/.../Example.tlsf & 1/2 \\
    \bottomrule
  \end{tabular}
  \caption{List of SyntComp specifications that are not
    syntactic obligations, but that can be decomposed into
    output-disjoint sub-specifications in which there are
    syntactic obligations.\label{tab:decomp}}
\end{table}

\fi
\end{document}